\newtheorem*{theorem*}{Theorem}
\renewcommand\onecolumngrid{% <<<<<<
\do@columngrid{one}{\@ne}%
\def\set@footnotewidth{\onecolumngrid}% <<<<<<<<<<<<<<<<
\def\footnoterule{\kern-6pt\hrule width 1.5in\kern6pt}%
}
\renewcommand\twocolumngrid{% <<<<<<
        \def\footnoterule{% restore rule
        \dimen@\skip\footins\divide\dimen@\thr@@
        \kern-\dimen@\hrule width.5in\kern\dimen@}
        \do@columngrid{mlt}{\tw@}
}%
\begin{document}

\title{Fundamental accuracy-resolution trade-off for timekeeping devices}% Force line breaks with \\

\author{Florian Meier}
\email[]{florianmeier256@gmail.com}
\affiliation{Vienna Center for Quantum Science and Technology, Atominstitut, Technische Universität Wien, 1020 Vienna, Austria}
\affiliation{Institute for Theoretical Physics, ETH Zurich, 8093 Z\"{u}rich, Switzerland}
\author{Emanuel Schwarzhans}
\email[]{e.schwarzhans@gmx.at}
\affiliation{Vienna Center for Quantum Science and Technology, Atominstitut, Technische Universität Wien, 1020 Vienna, Austria}
\author{Paul Erker}
\email[]{paul.erker@tuwien.ac.at}
\affiliation{Vienna Center for Quantum Science and Technology, Atominstitut, Technische Universität Wien, 1020 Vienna, Austria}
\affiliation{Institute for Quantum Optics and Quantum Information, Austrian Academy of Sciences, 1090 Vienna, Austria}
\author{Marcus Huber}
\email[]{marcus.huber@tuwien.ac.at}
\affiliation{Vienna Center for Quantum Science and Technology, Atominstitut, Technische Universität Wien, 1020 Vienna, Austria}
\affiliation{Institute for Quantum Optics and Quantum Information, Austrian Academy of Sciences, 1090 Vienna, Austria}

\date{\today}

\begin{abstract}
From a thermodynamic point of view, all clocks are driven by irreversible processes. Additionally, one can use oscillatory systems to temporally modulate the thermodynamic flux towards equilibrium. Focusing on the most elementary thermalization events, this modulation can be thought of as a temporal probability concentration for these events.
There are two fundamental factors limiting the performance of clocks: On the one level, the inevitable drifts of the oscillatory system, which are addressed by finding stable atomic or nuclear transitions that lead to astounding precision of today's clocks. On the other level, there is the intrinsically stochastic nature of the irreversible events upon which the clock's operation is based. This becomes relevant when seeking to maximize a clock's resolution at high accuracy, which is ultimately limited by the number of such stochastic events per reference time unit.
We address this essential trade-off between clock accuracy and resolution, proving a universal bound for all clocks whose elementary thermalization events are memoryless.
\end{abstract}

\maketitle
\raggedbottom
%\enlargethispage*{0.1cm}

Clocks dominate our daily lives unlike any other technology -- from ordinary things like catching the train in the morning to locating ourselves using the GPS -- they are involved everywhere. Physics has provided the theoretical and experimental foundation to develop accurate and stable clocks which has culminated in the 50s with the invention of atomic clocks \cite{Essen1955}. Year by year, these clocks have been improving their accuracy, with a state-of-the-art optical clock accumulating less than a hundred milliseconds of error over the lifespan of the sun \cite{Ludlow2015,Riehle2015,Bothwell2022}.
These advancements beg for the question whether there are any physical principles constraining a clock's performance. 
Driven by this open problem, quantum clocks are emerging into an independent field of research, unifying approaches from quantum information theory \cite{Rankovic2015,Renner2017,Stupar2018,Woods2021} and quantum thermodynamics \cite{Erker2017,Schwarzhans2021,Milburn2020,Manikandan2022,Cilluffo2022}. Recent experiments showcase today's technology is capable of exploring these ultimate limits of timekeeping \cite{Pearson2021,Berholts2022,He2022}.

This letter explores fundamental limitations of clocks coming from thermodynamics, because clocks, like all other physical systems, are subject to thermodynamical laws \cite{Milburn2020}. Even worse, they very much rely on the increase of entropy originating in the second law of thermodynamics.
This implies that (1) clocks are witnesses of the macroscopic breaking of time-reversal symmetry because they tick forwards in time. (2)~Thus, they must be driven by irreversible processes that drain out-of-equilibrium resources to output temporal information. These processes, though, are thermodynamic, therefore inherently stochastic and never perfectly predictable. (3)~We conclude, even an idealized clock could never be perfect simply due to the fact that the clock itself is fundamentally driven by stochastic processes.

We look at clocks that use this flow by counting elementary thermalization events to define \textit{ticks}. Instructive examples of such a process could range from grains of sand that pass through an hourglass to the photons that are reflected from a pendulum to ascertain it's position. Even modern atomic clocks ultimately rely on a macroscopic number of photons to read out the laser frequency (which is stabilized by feedback from atomic transitions). In all these cases, the underlying thermalization processes are stochastic and have some intrinsic rate $\Gamma$. 
The time passing between two events, relative to assumed smooth parameter time, is probabilistic and described by a probability density function. We refer to its average as $\mu$ and its uncertainty as $\sigma$.
We define the \textit{resolution} $\nu$ of the clock to be the inverse average time between two ticks $\nu=1/\mu$, and the \textit{accuracy} $N$ as the average number of times the clock ticks until it is off by one tick. For a sequence of independent and identically distributed ticks, the accuracy equals the signal-to-noise ratio $N=\mu^2/\sigma^2$ \cite{Erker2017}.
A clock usually modulates the temporal distribution of when the thermalization events occur as to improve its accuracy. For most practical clocks, the stability of this temporal modulation (for instance the laser frequency in atomic clocks) is the limiting factor to the accuracy. We show that even if this temporal modulation is perfectly stable, the stochastic process underlying the tick generation limits the accuracy at given resolution. In other words, the need to irreversibly generate a signal bounds the clock's performance even if it uses an eternally stable frequency reference.

\begin{restatable}[Accuracy-resolution trade-off]{theorem*}{nrbound}
Clocks with elementary ticking events generated by a memoryless stochastic process at rate $\Gamma$ obey the trade-off relation
\begin{align}\label{eq:nr_bound}
        N\leq \frac{\Gamma^2}{\nu^2}.
\end{align}
\end{restatable}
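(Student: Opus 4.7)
The plan is to establish the equivalent variance bound $\sigma^{2} \geq 1/\Gamma^{2}$ on the tick-to-tick waiting time $T$, since $\nu = 1/\mu$ then gives $N = \mu^{2}/\sigma^{2} \leq \Gamma^{2}\mu^{2} = \Gamma^{2}/\nu^{2}$.

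The first step is to bound the hazard rate of $T$. Any clock produces its tick as a (possibly history-dependent) thinning of the underlying Poisson jumps of rate $\Gamma$: the tick is an $\mathcal{F}_{t}$-adapted stopping time that coincides with one of those jumps. Since each Poisson event becomes a tick with probability at most one, the probability of ticking in $[t,t+dt]$ conditional on no earlier tick is at most $\Gamma\,dt$, i.e.\ the hazard rate $h(t) := p_{T}(t)/P(T>t)$ satisfies $h(t) \leq \Gamma$, equivalently $S_{T}(t)e^{\Gamma t}$ is nondecreasing.

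The second step compares $T$ to an $\mathrm{Exp}(\Gamma)$ random variable $X$ in the dispersive stochastic order. Using $p_{T}(t) = h(t)\bigl(1-F_{T}(t)\bigr)$, one has $(F_{T}^{-1})'(u) = 1/[(1-u)\,h(F_{T}^{-1}(u))]$, so
\[
\frac{d}{du}\bigl[F_{T}^{-1}(u) - F_{X}^{-1}(u)\bigr] \;=\; \frac{1}{1-u}\left[\frac{1}{h(F_{T}^{-1}(u))} - \frac{1}{\Gamma}\right] \;\geq\; 0,
\]
so the quantile spacings of $T$ dominate those of $X$ everywhere. The third step invokes the classical implication ``dispersive dominance $\Rightarrow$ variance dominance'': coupling $T = F_{T}^{-1}(U)$ and $X = F_{X}^{-1}(U)$ with $U \sim \mathrm{Unif}[0,1]$, both $X$ and $T-X$ are nondecreasing functions of $U$, whence $\mathrm{Cov}(X,T-X) \geq 0$ and $\sigma^{2} = \mathrm{Var}(T) \geq \mathrm{Var}(X) = 1/\Gamma^{2}$.

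The main obstacle is securing the hazard-rate bound in step one for fully general clock architectures: one must argue that even a clock with an arbitrarily rich internal state, processing the whole history of elementary events, can only emit ticks that are $\mathcal{F}_{t}$-adapted thinnings of the Poisson stream. This is physically intuitive but requires a careful description of the clock's generator. Once it is in place, the remaining analytic steps are tight and saturation is approached in the idealized ``delay + $\mathrm{Exp}(\Gamma)$'' limit, where the waiting-time variance concentrates at its minimal value $1/\Gamma^{2}$ regardless of the mean.
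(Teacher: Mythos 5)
Your proposal is correct, and it takes a genuinely different route from the paper's proof, even though both reduce the theorem to the single variance bound $\sigma^{2}\geq 1/\Gamma^{2}$. The paper proceeds in a hands-on way: it introduces the "top-level population" $p(t)=\tr(V\rho^{\text{no tick}}(t))/\Gamma\in[0,1]$ so that $P[t\leq T]=\exp(-\Gamma\int_0^t p)$, then constructs the translated-Heaviside population $p^{\Theta}(t)=\Theta(t-t_{0})$ with $t_{0}=\mu-1/\Gamma$ so that $T$ and $T_{\Theta}=t_{0}+\mathrm{Exp}(\Gamma)$ share the same mean, proves there is a unique crossing point $t_{*}$ of the two survival curves, and concludes $t_{2}\geq t_{2}^{\Theta}$ by a center-of-mass/area comparison (Proposition~\ref{prop:genericvariance} together with Proposition~\ref{prop:coincidenceprop}). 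You instead observe that the bound $h(t)\leq\Gamma$ on the hazard rate means $T$ dominates $\mathrm{Exp}(\Gamma)$ in the dispersive stochastic order, and invoke the standard implication (via the comonotone coupling $T=F_{T}^{-1}(U)$, $X=F_{X}^{-1}(U)$ and $\mathrm{Cov}(X,T-X)\geq 0$) that dispersive dominance implies variance dominance. Your argument is cleaner and actually proves something stronger --- dominance in every dispersion measure, not just the variance --- whereas the paper's single-crossing argument is self-contained and does not presuppose familiarity with stochastic orders; the two are close in spirit because the single crossing of survival curves is exactly what the dispersive order captures at the level of quantile spacings.

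On the ``main obstacle'' you flag: within the paper's framework this is not an open issue. The model is Markovian (Lindblad, eq.~\eqref{eq:no_tick_state_evolution}), and the conditional tick rate is by construction $\tr(V\rho^{\text{no tick}}(t))$ with $V=\sum_{j}J_{j}^{\dagger}J_{j}$, so the hazard bound $h(t)\leq\|V\|_{\max}=\Gamma$ is the definition of $\Gamma$ in eq.~\eqref{eq:def_gamma}, not something requiring a separate thinning argument. Your framing in terms of adapted thinnings of a Poisson stream is a sensible way to state the hypothesis for more general (e.g.\ non-Markovian) clock models, but in the setting of the theorem as stated it is immediate.
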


This trade-off complements other established results in the field considering restrictions imposed on the clock through entropy production \cite{Barato2016,Erker2017} or Hilbertspace dimension \cite{Woods2022,Yang2020}.

\paragraph*{Temporal probability concentration.}
The most primitive clock consists of two out-of-equilibrium thermal reservoirs in contact with each other. Counting the individual stochastic thermalization events as ticks can serve as a way to measure time, we call this the \textit{thermal reference clock}.
Sufficiently large reservoirs are memoryless, therefore, such stochastic jumps at equal rate are exponentially distributed \cite{Ingarden1997,Breuer2007,Klenke2020}. The coupling of the two baths defines a characteristic thermalization rate $\Gamma$. For an exponential distribution the standard deviation equals the average, which leads to $\mu=\sigma=\Gamma^{-1}.$
Consequently, such a clock has unit accuracy $N=1$ and only by averaging over many of these ticks are we able to achieve higher accuracy, but at the expense of resolution. Averaging over $M$ independent and identically distributed (i.i.d.) such events increases the variance and mean of the tick time $M$-fold, leading to a resolution $\nu\propto 1/M$ and accuracy $N\propto M$. This gives an inverse proportional accuracy-resolution scaling
\begin{align}\label{eq:ccg_bound}
    N= \frac{\Gamma}{\nu},
\end{align}
quadratically smaller than the upper bound in eq.~\eqref{eq:nr_bound}.
Upon closer inspection, we find a thermodynamic cost associated to this increase in accuracy: instead of having a single irreversible event producing a tick, now $M$ irreversible events are required.

A natural question to ask is whether it is possible to increase the accuracy beyond what is achievable through averaging in eq.~\eqref{eq:ccg_bound}, while still using the same underlying stochastic tick generating process. The answer lies in the observation that all ticking clocks known to us use a combination of two processes to tell time:\begin{enumerate}[label=(\alph*),itemsep=0mm]
    \item irreversible processes that generate ticks, and
    \item a filter process, \textit{temporal probability concentration}, which modulates the probability of the irreversible ticking events to occur.
\end{enumerate}
By means of temporal probability concentration (TPC), a (sensible) clock centers the probability distribution of the stochastic events with a periodic process such that the ticks occur closely around well-defined instants in time.
Exponential decay of an unstable two-level system is an example for a clock without TPC where the decay defines the tick.
The tick probability density conditioned on the tick not yet having occurred equals $\Gamma$, the decay constant which is time-independent.
Such a clock has accuracy $N=1.$
Clockworks modulate this probability by, for example, driving the two-level system from the ground state into the excited state.
This gives an effective time-dependency of the ticking probability as illustrated in Fig.~\ref{fig:TPC_good_sampling}, and this driving is what we call TPC and in general, it gives an accuracy $N>1$. 
The key is to perform the driving autonomously without external temporal control.
In the following we introduce a model for this with more details on the formalism of TPC in Sec.~\ref{sec:temporal_probability_concentration} of the Appendix.
\begin{figure}
    % Word equivalent ~ 129
    \centering
    \includegraphics[width=\columnwidth]{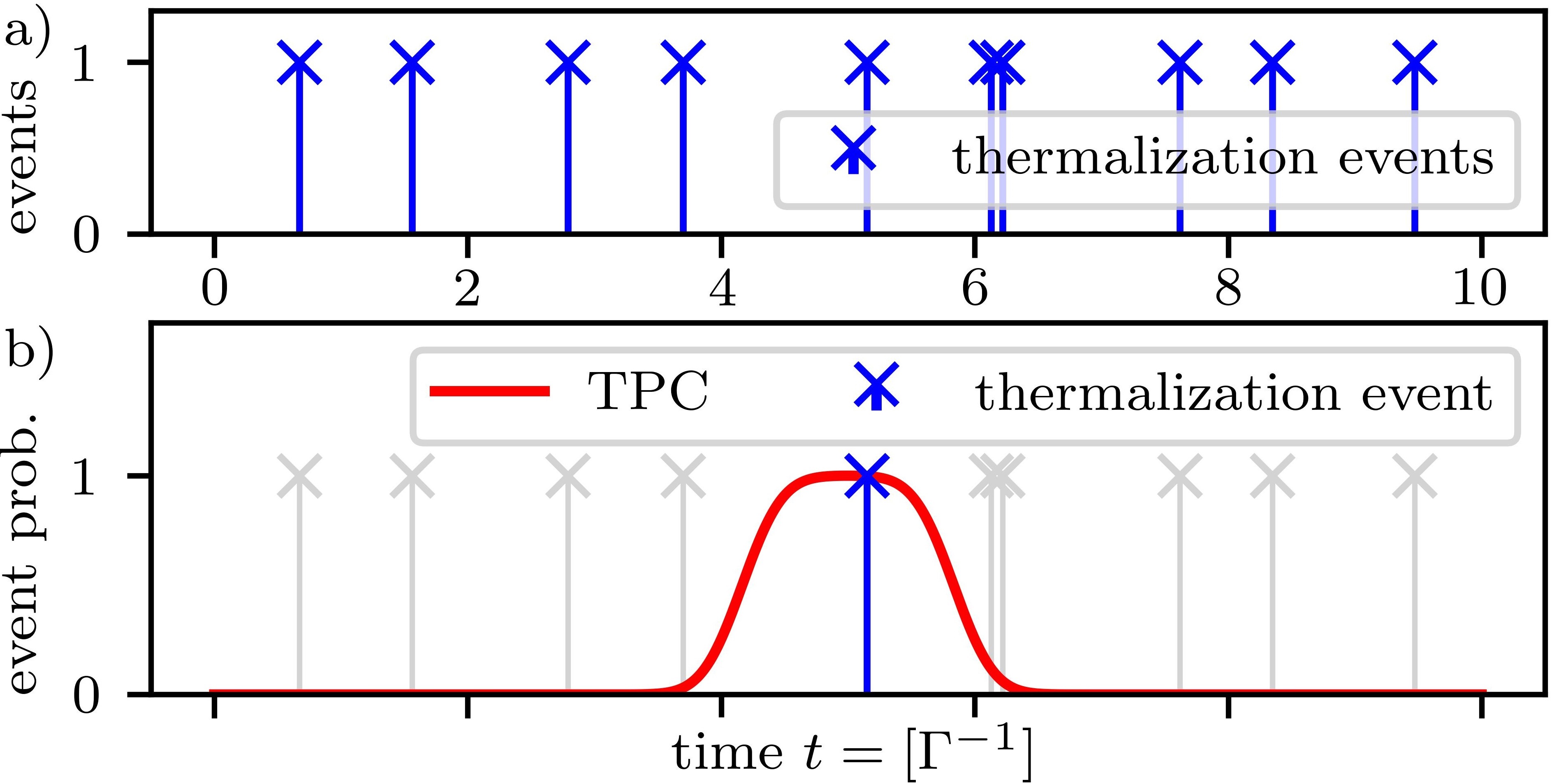}
    \caption{The two plots qualitatively show the ticking behavior of two clocks with respect to parameter time $t$ (horizontal axis). The ticks of such a clock are generated by individual thermalization events at rate $\Gamma$ (vertical stripes). Figure a) sketches a generic example, where these events are Poisson distributed. In b), temporal probability concentration is shown, another process through which the probability of an thermalization event can be suppressed at times (shown by the hat-shaped curve). As a result, the probability density for a tick can be concentrated around a desired average value, here $5\,\Gamma^{-1}$, with the tick time uncertainty of order $\Gamma^{-1}$ bounded by the width of the TPC window.}
    \label{fig:TPC_good_sampling}
\end{figure}
\begin{figure}
    % Word equivalent ~ 76
    \centering
    \includegraphics[width=\columnwidth]{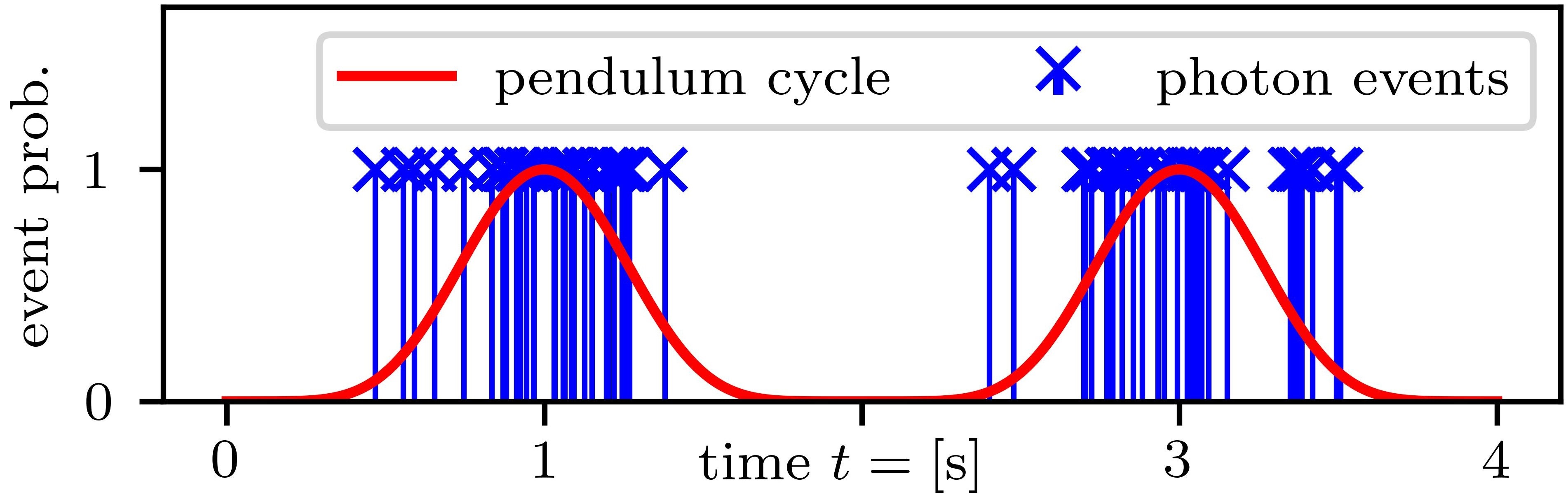}
    \caption{We illustrate the oversampling regime with a pendulum in a weakly lit environment. The two sources of entropy production for this clock are, firstly, the friction within the clockwork itself and, secondly, the light-matter interaction to track the position of the pendulum.
    While the entropy from friction can in principle be vanishingly small, the one from observation is fundamental and can not be made zero without losing the ability to measure time.
    The plot shows the elementary ticking events of this clock as a function of time, i.e., the photons reflected off the pendulum when it is close to its maximum deflection, and the pendulum oscillation (TPC). In the oversampling regime, the photon rate is much higher than the frequency of the TPC, and this allows to use the TPC period to define ticks.}
    \label{fig:TPC_oversampling}
\end{figure}

Examples of clocks following this scheme are given in \cite{Erker2017,Woods2019,Schwarzhans2021,Woods2021,Woods2022,Erker2014,Dost2023}, with a particularly illustrative one given in the first mentioned reference: a three-qubit system where one of the qubits couples dissipatively to the electromagnetic field with strength $\Gamma$, emitting photons when it decays, and these events are then counted as ticks.
Alone, the qubit would undergo exponential decay and give accuracy $N=1,$ but here it is autonomously driven by the other two qubits, which themselves are coupled to out-of-equilibrium heat baths.
The Hamiltonian coupling the three qubits is the periodic process which makes the effective decay probability of this clock time-dependent, and leads to an enhanced accuracy $N>1$ due to TPC. More details are in Sec.~\ref{sec:an_example} of the Appendix.
Many macroscopically sized clocks, be it pendulums or atomic clocks, operate in an oversampling regime, where multiple irreversible events occur per TPC cycle (see Figure \ref{fig:TPC_oversampling}). In this regime, the thermodynamic cost of a clockwork often becomes obscure, as both the dissipation due to the macroscopic number of irreversible events and the TPC have to be accounted for.
Atomic clocks, for example, do not count photons as a way to tell time, rather they use the oversampled coherent oscillations of a maser tuned some stable reference atomic transition to estimate the TPC frequency.
The time-scale $\Gamma$ of the fundamental ticking events appearing in eq.~\eqref{eq:nr_bound} is therefore not the limiting factor to the accuracy of atomic clocks, the stability of the coherent oscillation of the electromagnetic field is, i.e., the TPC stability. 
In atomic clocks, accuracy is examined using Allan Variance which captures the stability of the TPC oscillation over many different time-scales \cite{Allan1966,Allan1997,Riley2008}. 
Quantum projection noise, thermal noise but also natural drifts in the experimental setup are what affect the stability of atomic clocks \cite{Shiga2014,Schulte2020}.
A summary with some key references on the working principle of atomic clocks can be found in Sec.~\ref{sec:working_principle_atomic_clocks} of the Appendix.

So far, we have established that every clock is subject to irreversible processes and that through TPC, they can increase their accuracy.
In the following, we introduce a mathematical model to describe clocks on a quantum scale and where the two contributions (a) irreversible ticks and (b) temporal probability concentrations have an explicit representation in the equations of motion. 
Eventually, this framework allows us to formulate the fundamental trade-off between the accuracy and resolution of clocks, dictated by thermodynamics.
\paragraph*{Model.}
Quantum clocks \cite{Erker2017,Woods2019,Schwarzhans2021,Woods2021,Woods2022,Erker2014,Dost2023} only weakly coupled to a memoryless environment (sufficiently large thermal baths are one such environment) can be described by a Lindblad master-equation \cite{Lindblad1976}. If we are again talking about ticking clocks, their state can be Fourier-decomposed by introducing a free counting field $\chi$ \cite{Schaller2014,Bruderer2014,Landi2023},
\begin{align}
    \label{eq:classical_tick_mixture}
    \rho(t,\chi)=\sum_n \rho^{(n)}(t)e^{in\chi}.
\end{align}
Each non-normalized density matrix $\rho^{(n)}(t)$ can be thought of as the system's state conditioned on $n$ ticks having already occurred.
In a setting where the interactions with the environment are memoryless, the ticks are produced by linear jump operators $J_j$, the generators of the process (a). Aside from this, the clock is subject to a general open quantum system's evolution with Lindblad operator $\mathcal{L}_\text{no tick}$ which is the generator of TPC, i.e., process (b).
We are interested in the statistics of the time $T$ between any two successive ticks $n$ and $n+1$.
These statistics may differ from each tick to the next one because the initial state changes, and this also means that the accuracy $N$ and frequency $\nu$ can change with each tick $n$. However, the trade-off theorem is agnostic to the value $n$, and holds for all pairs of $N$ and $\nu.$
Without loss of generality, we can therefore assume that the $n$th tick happened at time $t=0$ and that $n=0$, and we can look at the evolution of the conditional state $\rho^{(0)}(t)$ that exactly $n=0$ ticks have occurred.
Given the initial state $\rho^{(0)}(0)$ of the clock from eq.~\eqref{eq:classical_tick_mixture}, the evolution is entirely determined by
\begin{align}\label{eq:no_tick_state_evolution}
    \dot{\rho}^{(0)}(t)=\mathcal{L}_\text{no tick}[\rho^{(0)}(t)]-\frac{1}{2}\sum_j\left\{J_j^\dagger J_j,\rho^{(0)}(t)\right\}.
\end{align}
The right-most term of eq.~\eqref{eq:no_tick_state_evolution} produces the ticks and is therefore responsible for the process (a) while $\mathcal{L}_\text{no tick}$ generates the TPC, i.e., process (b).
In this sense, one may attempt to separate (a) and (b) into two independent processes, as in \cite{Schwarzhans2021}; however the back-action of the tick channel always affects the clock evolution which is an inherent feature of clocks that operate on a quantum scale.

\paragraph*{Ticking statistics.} 
Together with the initial state, eq.~\eqref{eq:no_tick_state_evolution} defines the evolved state $\rho^{(0)}(t)$ and therefore entirely determines how the tick time random variable $T$ is distributed, because the cumulative probability $P[t\leq  T]$ that no tick has occurred up to time $t$ equals the trace $P[t\leq T]=\tr\rho^{(0)}$.
To highlight the influence of TPC, we can equivalently express the cumulative tick probability as a function of the conditional tick rate $P[T=t|T\geq t]$, defined as the instantaneous tick probability at time $t$ conditioned on the tick not having happened before. We can obtain this rate by working with the clock's state conditioned on not having decayed $\rho^\text{no tick}=\rho^{(0)}/\tr\rho^{(0)}$.
Then, the conditional tick rate equals the trace $\tr(V\rho^\text{no tick}(t')),$ where $V=\sum_j J_j^\dagger J_j$ is the positive operator generating the clock's ticks \cite{Meier2022}, and we find the relationship
\begin{align}\label{eq:cumulative_non_tick}
    P[t\leq T]=\exp(-\int_0^t dt' \tr(V\rho^\text{no tick}(t'))),
\end{align}
which we further outline in the Appendix~\ref{sec:general_formalism}.

The master-equation description of quantum ticking clocks allows us to formalize the statement made in the accuracy-resolution trade-off theorem. The tick channel is governed by the positive operator $V$, whose spectral decomposition reveals the time-scales involved in the decay. We define the rate $\Gamma$ as the fastest one of them,
\begin{align}
\label{eq:def_gamma}
    \Gamma := \|V\|_\text{max}=\max_{\rho \in \mathcal{S}(\mathcal{H}_C)}\tr V\rho,
\end{align}
where the maximum is taken over all possible clock states $\rho\in\mathcal{S}(\mathcal{H}_C)$.
This is consistent with the special case of exponential decay (as used in \cite{Barato2016,Erker2017}) with rate $\Gamma,$ since there, the tick generator is given by a single jump operator $J=\sqrt{\Gamma}\ketbra{0}{1}$.
If a clock is described by eq.~\eqref{eq:no_tick_state_evolution} and produces its ticks by means of the generators $J_j,$ then the accuracy is limited by the resolution, regardless of how well a possible clockwork in the background works. The resulting bound is eq.~\eqref{eq:nr_bound} from the accuracy-resolution trade-off theorem which we recall here for completeness,
\begin{align*}
    N\leq \frac{\Gamma^2}{\nu^2},%\tag{\ref{eq:nr_bound} rev.}
\end{align*}
and prove in the following.
\begin{proof}
The key observation is that no clock can on average tick faster than the decay process that mediates the ticks allows for. If the elementary ticks are generated by an ensemble of jump operators $J_j$, then the fastest such rate is given by $\Gamma$ (see eq.~\eqref{eq:def_gamma}).
A single such channel produces ticks that are exponentially distributed and as a consequence, we can reduce the generic form from eq.~\eqref{eq:cumulative_non_tick} to the special case of exponential decay, where the exponent reduces to $-\Gamma \int dt' p(t'),$ and $p(t)$ is a measure of the clock's state population that can decay.
Regardless of how the TPC modulates $p(t),$ the variance $\sigma^2$ of the tick can never be smaller than $\Gamma^{-2},$ the variance of exponential decay. This is a manifestation of the fact that no clock can tick faster than it's underlying decay process. Now that the variance $\sigma^2$ of the tick time distribution is bounded from below by $\Gamma^{-2}$, the main theorem follows. We refer the reader to Sec.~\ref{sec:proofs} of the Appendix for a detailed account of the proof. 
\end{proof}

\begin{figure}
    % Word equivalent ~ 152
    \centering
    \includegraphics[width=\columnwidth]{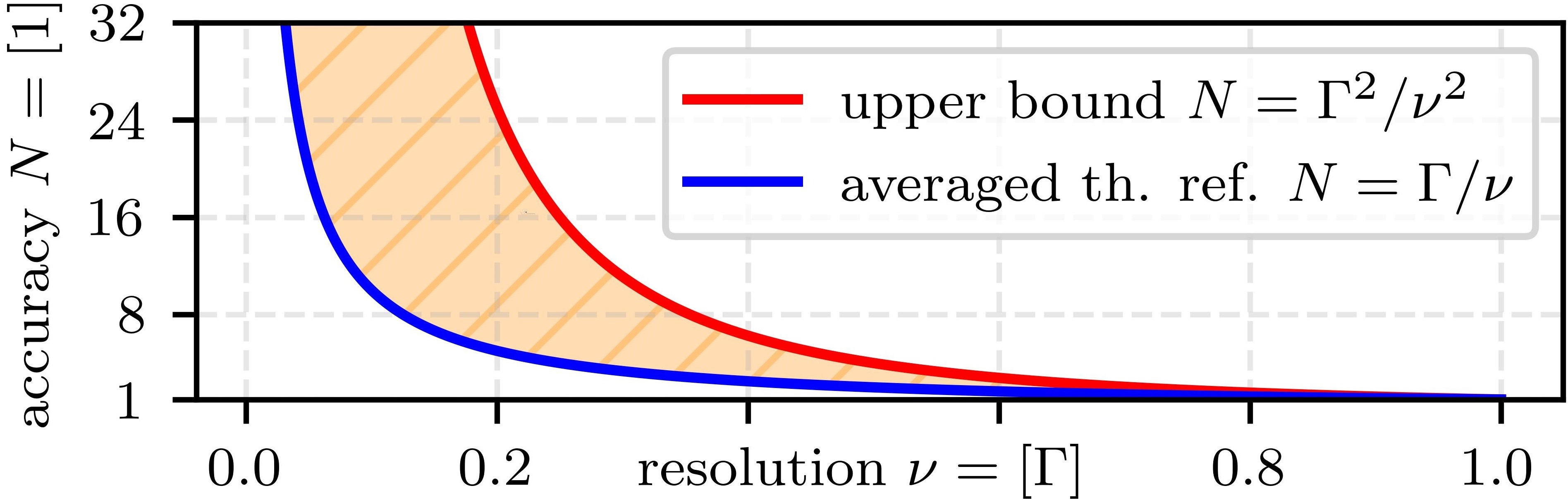}
    \caption{In this figure, we plot the accuracy $N$ on the vertical axis versus the resolution $\nu$ on the horizontal axis. The red curve highlights the upper bound for the accuracy $N=\Gamma^2/\nu^2$ given by the accuracy-resolution trade-off theorem. % THMREF
    Below, we have the blue curve, indicating the accuracy and resolution that can be achieved by averaging the thermal reference clock with fundamental tick rate $\Gamma$.
    The shaded area between the two curves contains all the accuracy-resolution touples that cannot be achieved by classically averaging a rate $\Gamma$ stochastic process, but which are still allowed by the upper bound. How close one can get to the red curve with quantum clocks and whether classical clocks are constrained to be below the blue line are open questions.}
    \label{fig:nr_plane}
\end{figure}
In this letter, we have analyzed quantum timekeeping devices through a thermodynamic lens, where their tick generation can be decomposed into two processes, (a) a stochastic process which irreversibly produces the ticks and (b) temporal probability concentration (TPC) through a clockwork which controls when these elementary ticks occur.
Then, we asserted that there is a fundamental trade-off between the clock's accuracy and resolution (see trade-off theorem but also Fig.~\ref{fig:nr_plane}), stating that the number of times a clock can tick until it goes wrong by one tick is universally bounded by the inverse of the clock's resolution squared.

\paragraph*{Atomic and optical clocks.}
Timekeeping devices fundamentally require stochastic thermalization events to measure time, and we have shown that for the class of clocks using those events directly to define ticks, the trade-off from eq.~\eqref{eq:nr_bound} applies.
Macroscopic clocks such as atomic and optical ones work in a different regime, where the TPC is sampled by irreversible events and ticks are defined not by event numbers directly, but rather those are used to estimate the oscillatory TPC process frequency.
This frequency is usually downsampled to a lower frequency, like for example the $10\,\mathrm{MHz}$, or the $1\,\mathrm{Hz}$ standard that is then used to generate an electrical signal to read out the ticks at said resolution \cite{Marlow2021}.
Our work points out some of the challenges that have to be overcome for optical clocks being used to produce ticks at $\mathrm{THz}$ resolution: for one, ultra-fast electronics that can generate an electric signal using only a single photon per oscillation of the e.m.\ field. For another, this gives an estimate of the power required from a laser to create such a photon flux.
While for time-standards whose main goal it is to provide long-time stability, power consumption may not be the primary concern, there are other uses for clocks where energy-efficiency matters.
For example, quantum technologies require accurate high-resolution timers that do not disturb the fragile state of the quantum system through heat dissipation \cite{Ball2016,Buffoni2022,Tholen2022,Taranto2023}.
For building noise-robust quantum devices, it may thus become unavoidable to account for the thermodynamic resources consumed by clocks, which is where we expect quantum clocks to outperform their macroscopic classical counterparts.

\paragraph*{Achievability.} 
Good clocks excel by resolving time well while at the same time being highly accurate, i.e., maximizing both $\nu$ and $N$; ideally, they do so at optimal thermodynamic costs far away from the oversampled regime and only with a single thermalization event per tick. As a foundational question, we may ask, does there exist an appropriate clock Lindblad operator $\mathcal{L}$, through which it is possible to saturate the bound $N=\Gamma^2/\nu^2$? And the answer is: No, at least not for finite systems.
As it turns out, saturating this bound amounts to a time-dependent Lindblad-operator, which instantaneously rotates the clock state from a subspace of states where it can't tick onto a state from which it decays at rate $\Gamma$ (see proof in Sec.~\ref{sec:proofs} of the Appendix).
The TPC in this case would be ideal as it concentrates the decay event to the most narrow time-window possible, the one given by the underlying stochastic process, but quantum speed limits prohibit such an instantaneous state rotation for systems finite in energy and dimension \cite{Mandelstam1991,Margolus1998,Deffner2017}.
A weaker question we can ask is whether at least the scaling $N=O(\Gamma^2/\nu^2)$ can be reached, and which resources are required for this.
Without imposing any restriction on the clock's Lindbladian aside from time-independence, there are quantum clocks which asymptotically reach the squared scaling of $N$ for large dimensions of the clock Hilbert space \cite{Woods2019}. To do so, they require highly coherent states, whose generation using only thermal resources is technically infinitely expensive from an entropic perspective \cite{Taranto2023} and the approximation with finite resources is an open problem.
Of interest in the field of thermodynamics are the autonomous quantum clocks which only require thermal resources to run \cite{Erker2017,Schwarzhans2021,Manikandan2022,Meier2022, Erker2014}, in particular no external control but also no coherence in the initial state. It is ongoing research, whether it is possible for such clocks to approach the optimal accuracy-resolution scaling.

Questions about fundamental precision limits are generally of great interest. The accuracy of clocks falls into this category and closely related problems have been examined in the field of (quantum) stochastic thermodynamics under the name of thermodynamic uncertainty relations (TUR) \cite{Barato2015,Horowitz2020,Vu2022} and kinematic uncertainty relations (KUR) \cite{Terlizzi2018,Vo2022,Prech2023}. Future work exploring connections between timekeeping and the TUR / KUR has to reveal whether a quantum thermodynamic advantage close to the optimal accuracy-resolution bound $N=\Gamma^2 / \nu^2$ is in principle possible. 
Both superconducting circuits and optomechanical systems are promising platforms to test the achievability of the optimal accuracy-resolution relation while accounting for the resources and ensuring that they do not introduce a hidden clock through a backdoor.

\paragraph*{Acknowledgements.} We wish to acknowledge fruitful discussions and feedback from Ralph Silva, Nuriya Nurgalieva, Renato Renner, Maximilian Edward Lock and Jake Xuereb.
F.M., P.E. and M.H. acknowledge the financial support from the European Commission under the Horizon Europe project ASPECTS (Grant No. 101080167) and the ERC consolidator grant COCOQUEST (Grant No. 101043705).
F.M. acknowledges the SEMP scholarship from Movetia for his research stay at Atominstitut, Technische Universität Wien.
E.S. acknowledges the support from the Austrian Science Fund (FWF) through the START project Y879-N27, the ESQ Discovery grant “Emergence of physical laws: From mathematical foundations to applications in many body physics” and the ERC consolidator grant COCOQUEST (Grant No. 101043705).
P.E. and M.H. further acknoledge funds from the FQXi (FQXi-IAF19-03-S2)
 within the project ``Fueling quantum field machines with information''.

% \bibliography{bibibibi}
%apsrev4-2.bst 2019-01-14 (MD) hand-edited version of apsrev4-1.bst
%Control: key (0)
%Control: author (8) initials jnrlst
%Control: editor formatted (1) identically to author
%Control: production of article title (0) allowed
%Control: page (0) single
%Control: year (1) truncated
%Control: production of eprint (0) enabled
%

% \onecolumngrid
\clearpage
% \newpage
% \twocolumngrid

\appendix
\section*{Appendices}
\section{Temporal probability concentration}
\label{sec:temporal_probability_concentration}
In this section, we elaborate on the technical details regarding the concept of temporal probability concentration (TPC) which we have introduced in the main text of this letter.
This concept has first been explicitly mentioned in~\cite{Schwarzhans2021}, but implicitly, it has been applied in a variety of works~\cite{Erker2017,Woods2019,Woods2022,Dost2023}.
In the main text, a distinction of tick generation into two separate processes was made: a) the irreversible process whose occurrence \textit{defines} the tick, and b) an additional filter mechanism, which temporally concentrates the probability of when a tick, i.e., process a) occurs. We refer to the latter as TPC.

\subsection{An example}
\label{sec:an_example}
As an illustration, let us take a fixed exponential decay process, i.e., a physical system comprising two states, say an excited one and the ground state, and the excited state decays with a fixed rate $\Gamma$ into the ground state.
If we were to initialize this system in the excited state, its tick rate would be constantly equals to $\Gamma$ and the resulting tick probability density $P[T=t]=\Gamma e^{-\Gamma t}$ would be exponential.
Not unsurprisingly, this clock does not excell in accuracy, which manifests itself in the accuracy $N=1,$ as already elaborated in the main text.
Let us now introduce an additional dynamical process, for example the driving by an autonomous thermal machine as in~\cite{Erker2017}. Such an interaction could first rotate the clock state from the ground state from which it can not decay into the excited state, from which it can decay, leading to an effective decay rate not always equals $\Gamma$, but now modulated dynamically by the autonomous thermal machine. This time-dependency is the TPC of the tick process, and in the example of~\cite{Erker2017} it leads to an accuracy $N>1.$

We here summarize the clock example from~\cite{Erker2017} to illustrate the general model for clocks introduced with the equations of motion in eq.~\eqref{eq:no_tick_state_evolution} in the main text.
In general, the two-level system in the clock described above can be extended to a $d$-dimensional ladder which allows to delay the decay process even further.
To be more specific, the clock model comprises the ladder $\mathcal H_L$ spanned by the states $\ket{0}_L, \ket{1}_L,\dots,\ket{d-1}_L$, and the thermal machine. The thermal machine itself is made up of two qubits, a \textit{cold} one $\mathcal H_C$ with the states $\ket{0}_C,\ket{1}_C$ and a \textit{hot} one $\mathcal H_H$ with states $\ket{0}_H,\ket{1}_H.$
The interactions between the ladder and the two qubits is described by the Hamiltonian $H=H_0 + H_\text{int},$ where 
\begin{align}
    H_0 = \omega_C \ketbra{1}{1}_C + \omega_H\ketbra{1}{1}_H + \sum_{n=0}^{d-1} n\omega_L\ketbra{n}{n}_L,
\end{align}
is the system's free Hamiltonian, with $\omega_C$ the energy-splitting of the cold qubit, $\omega_H$ that of the hot qubit, and $\omega_L$ that of the ladder. The term
\begin{align}
    H_\text{int} = g\sum_{n=0}^{d-1} \Big(\ketbra{10}{01}_{CH}\otimes\ketbra{n+1}{n}_L + \mathrm{h.c.}\Big)
\end{align}
describes the population exchange between the two-qubit thermal machine and the ladder system. 
Furthermore, there are the thermal dissipators $\mathcal L_C,$ and $\mathcal L_H$ modelling the interactions of the cold qubit with a cold bath at inverse temperature $\beta_C$ and those of the hot qubit with it's hot bath at inverse temperature $\beta_H$.
The dissipators are of the form $\mathcal L_C = n_C \gamma_C\mathcal D[\ketbra{1}{0}_C]+(1+n_C)\gamma_C\mathcal D[\ketbra{0}{1}_C]$, where $n_C$ is the photon number of the cold bath given by Bose-Einstein statistics, $n_C = (e^{\beta_C\omega_C}-1)^{-1},$ and $\gamma_C$ is the coupling rate of the cold bath to the cold qubit.
The dissipator for the hot bath is analogous but with the subscript $H$ instead of $C$.
The terms $\mathcal D$ are as usually defined as $\mathcal D[L] = L\circ L^\dagger - \frac{1}{2}\left\{L^\dagger L,\circ\right\}.$
All these expressions together give rise to the Lindbladian part $\mathcal L_\text{no tick}$ responsible for TPC,
\begin{align}
    \mathcal L_\text{no tick} = -i[H,\circ] + \mathcal L_C + \mathcal L_H.
\end{align}
The tick generation then comes from the ladder decaying from it's top level state into the ground state through the tick generating operator $J, $ defined by
\begin{align}
    J = \sqrt{\Gamma} \ketbra{0}{d-1}_L,
\end{align}
where $\Gamma$ is the coupling strength of the ladder's transition $\ket{d-1}_L \rightarrow\ket{0}_L$ to the environment, i.e., the decay rate.
If we go through the calculations provided in reference~\cite{Erker2017}, we find that this clock's accuracy in the regime $\Gamma, g\ll \gamma_H,\gamma_C$, and $g\ll \omega_C,\omega_L$, and under the resonance condition $\omega_C + \omega_L = \omega_H$ grows with the dimension $d$ but the resolution decreases.
In the asymptotic limit, as cold bath temperatures goes to absolute zero, $\beta_C^{-1} \rightarrow 0,$ the accuracy grows linearly in the dimension, $N\propto d,$ and the resolution inverse linearly, $\nu\propto d^{-1}.$
The resulting accuracy-resolution relationship is also inverse linear for this clock, $N \propto \nu^{-1}$; as the authors in~\cite{Erker2017} state, the clock in this regime behaves essentially classically, hence, not unexpectedly, we find the classical trade-off from eq.~\eqref{eq:ccg_bound} in the main text.

\subsection{General formalism}
\label{sec:general_formalism}
In the following we approach the concept of TPC from two different perspectives, firstly from the equations of motion by starting from eq.~\eqref{eq:no_tick_state_evolution}, describing explicitly the evolution of a given tick state, and secondly from a formal probability theoretic paradigm.

\paragraph*{TPC as an emergent property from the equations of motion.}
For convenience, let us recall the equation of motion~\eqref{eq:no_tick_state_evolution} from the main text,
\begin{align}
\label{eq:no_tick_state_evolution_appendix}
    \dot{\rho}^{(0)}(t)=\mathcal{L}_\text{no tick}[\rho^{(0)}(t)]-\frac{1}{2}\sum_j\left\{J_j^\dagger J_j,\rho^{(0)}(t)\right\}.
\end{align}
Ticks, by definition, are the transitions generated by the operators $J_j$, and this is the process a) in our list from before. This means, if the clock system undergoes any one of the stochastic transitions $J_j,$ this is counted as a tick.

We can take the trace of the equation of motion~\eqref{eq:no_tick_state_evolution_appendix}, and we obtain on the left-hand side the derivative of the cumulative tick probability density,
\begin{align}
    \tr\left\{\dot\rho^{(0)}(t)\right\} = \frac{d}{dt}P[t\leq T].
\end{align}
For a well-behaved, here, continuously differentiable, cumulative probability $P[t\leq T]$, the derivative equals the probability density function (PDF) $P[T=t]$. This allows us to identify the trace of the right-hand side of eq.~\eqref{eq:no_tick_state_evolution_appendix} as the probability density,
\begin{align}
    P[t=T] = \tr\left\{ V \rho^{(0)}(t)\right\},
\end{align}
where $V=\sum_j J_j^\dagger J_j$ as defined in the main text. This probability distribution completely describes the statistics of the tick considered. Average time between ticks $\mu$ and variance of the time between ticks $\sigma^2$ are defined with respect to this PDF.
An related quantity, which is of particular relevance from an operational perspective is the conditional tick probability density: the probability density $P[T=t | T\geq t]$ that the tick occurs at time $t$ conditioned on the fact that it has not ticked before that time $t$. We can calculate the conditional PDF using
\begin{align}
    \label{eq:cond_tickPDF}
    P[T=t | T\geq t] &= \frac{P[T=t \wedge T\geq t]}{P[T\geq t]} \\
    &=\frac{P[T=t]}{P[T\geq t]},
\end{align}
which reveals that the conditional tick PDF equals the trace that appeares in eq.~\eqref{eq:cumulative_non_tick} in the main text,
\begin{align}
    P[T=t | T\geq t]=\tr(V\rho^\text{no tick}(t')).
\end{align}
In this form, TPC becomes particularly apparent: for a clock without any internal dynamics, i.e., $\mathcal L_\text{no tick} \equiv 0$ where ticks are generated through exponential decay, the conditional tick PDF is constant, $P[T=t]=\Gamma,$ where $\Gamma$ is the rate of the exponential decay.
This is the example, where the stochastic thermalization event that generates the tick is not temporally concentrated, hence the time-independence of the conditional tick PDF.
In contrast, for a non-trivial clockwork as for example presented in the references~\cite{Erker2017,Woods2019,Woods2022,Dost2023}, the conditional tick PDF has time-dependency, which eventually leads to a non-exponential tick probability with accuracy $N>1.$

\paragraph*{TPC as a conditional probability density.}
The previous paragraph introduced the notion of the conditional tick PDF $P[T=t|T\geq t]$ that a tick happens at time $t$ conditioned on the tick not having happened before.
Equation~\eqref{eq:cumulative_non_tick} in the main text relates this expression to the tick cumulative tick probability density $P[T\geq t]$ via the more general identity,
\begin{align}\label{eq:cond_tickPDF_exponential}
    P[T\geq t] = \exp\left({-\int_0^t d\tau P[T=t|T\geq t]}\right).
\end{align}
Here we would like to derive this expression and elaborate. 
Definition of cumulative probability distributions (CDFs) and PDFs ensure, that
\begin{align}
    \frac{d}{dt} P[T\geq t] &= -P[T=t],
\end{align}
so long as the derivative is well-defined and continuous.
If we then use the conditional probability law from eq.~\eqref{eq:cond_tickPDF} of this supplemental material, we find that
\begin{align}
    \frac{d}{dt} P[T\geq t] &= -P[T\geq t]P[T=t|T\geq t].
\end{align}
The solution of this differential equation is the exponential expression as generally given in eq.~\eqref{eq:cond_tickPDF_exponential}.
This shows the relation in eq.~\eqref{eq:cumulative_non_tick}.

\section{Working principle of atomic clocks}
\label{sec:working_principle_atomic_clocks}
Atomic clocks are the technological state-of-the-art when it comes to timekeeping and in the letter, we discuss the relevance of the trade-off theorem to atomic clocks.
In this section, we give the interested reader a brief description of the working principle of atomic clocks and refer them to more specialized references.

\paragraph*{Frequency estimation vs. tick generation.}
Atomic clocks use the period of the coherent electromagnetic field oscillation in a laser to measure time and they use a reference frequency, for example electronic transitions in an atom, to stabilize the laser's frequency.
Commonly, atomic clocks are used to generate a stable frequency reference. The coherent light field of the laser which is the basis of every atomic clock is generated using stimulated emmission (see e.g.~\cite{Carmichael1993} for an open quantum system's approach or~\cite{Eichhorn2014}, for an applied textbook).
In practice, the laser frequency is inevitably subject to  drifts and noise due to interactions with the environment and inherent imperfectness of the constituents (e.g.~Brownian motion). To correct for this, elaborate feedback techniques have been developed, e.g., the hyperfine groundstate transition of an Caesium-133 atom is used as a frequency reference to stabilize the laser~\cite{Essen1955,Major2007,Bize2019,Marlow2021}.
In our model introduced in the main text, all frequencies are idealized to be perfectly stable; in particular also the TPC's oscillatory frequency is assumed to be unchanging.
We reveal that despite these idealizations, such a clock can not be perfect, due to the underlying stochasticity of the thermal processes. Formally, we show this with the accuracy-resolution trade-off theorem.

However, frequency stabilization is only part of the story required for accurate timekeeping, as an additional process is required to produce ticks.
In the oversampling regime atomic clocks work in, ticks are not uniquely defined but one may chose to define ticks as the zero-crossings or the maxima of the oscillation.
In this regime, the elementary stochastic events (here: photoemission) are not directly used as a tick definition anymore but rather as a means to sample the TPC oscillation, which then defines ticks as for example one period of the oscillation.
Only in the regime where such a clock is run with a single photon per oscillation of the electromagnetic field, and the photon defines a tick, does the trade-off theorem impose a practical restriction on the clock's performance.

\section{Proof of the accuracy-resolution trade-off}
\label{sec:proofs}
Before we get started with the proof of the accuracy-resolution trade-off theorem (which is a generalization of the one shown in \cite{Meier2022}), some preliminary notation has to be established. When it comes to the cumulative non-tick probability $P[t<T]$ as in eq.~\eqref{eq:cumulative_non_tick} of the main text, the trace $\tr V \rho_C(t)$ can be rewritten as
\begin{align}
    \tr( V \rho_C(t)) = \Gamma p(t),
\end{align}
where $p : \mathbb{R}_{\geq0}\rightarrow[0,1]$ is a smooth function that takes values between $0$ and $1$. This comes from the fact that $\Gamma$ is the maximum of the expression over all states $\rho$ and the function can only take positive values because $V$ is a positive operator. This expression puts is into the position, where we can write
\begin{align}
\label{eq:non_tick_prob_general_decay}
    P[t\leq T]=\exp(-\Gamma\int_0^t dt' p(t')),
\end{align}
which is formally the exact same expression that one would obtain in the case where the clock's elementary ticks are exponential decay. We will from now on refer to $p(t)$ as the \textit{top-level population}. The remaining sections are an adapted version of the proof of the accuracy-resolution theorem originally derived in \cite{Meier2022}.

\subsection{The Heaviside-$\Theta$ population}
\label{sec:heaviside_population}
\begin{restatable}[Heaviside populations]{definition}{heavisidefamily}
\label{def:heavisidefamily}
The family of Heaviside populations comprises all the functions
\begin{align}
    p^\Theta : \mathbb{R}_{\geq 0}&\rightarrow [0,1]\\
    t&\mapsto \Theta (t-t_0),
\end{align}
where $t_0\geq 0.$
\end{restatable}
\begin{figure}[b]
    \centering
    \includegraphics[width=\columnwidth]{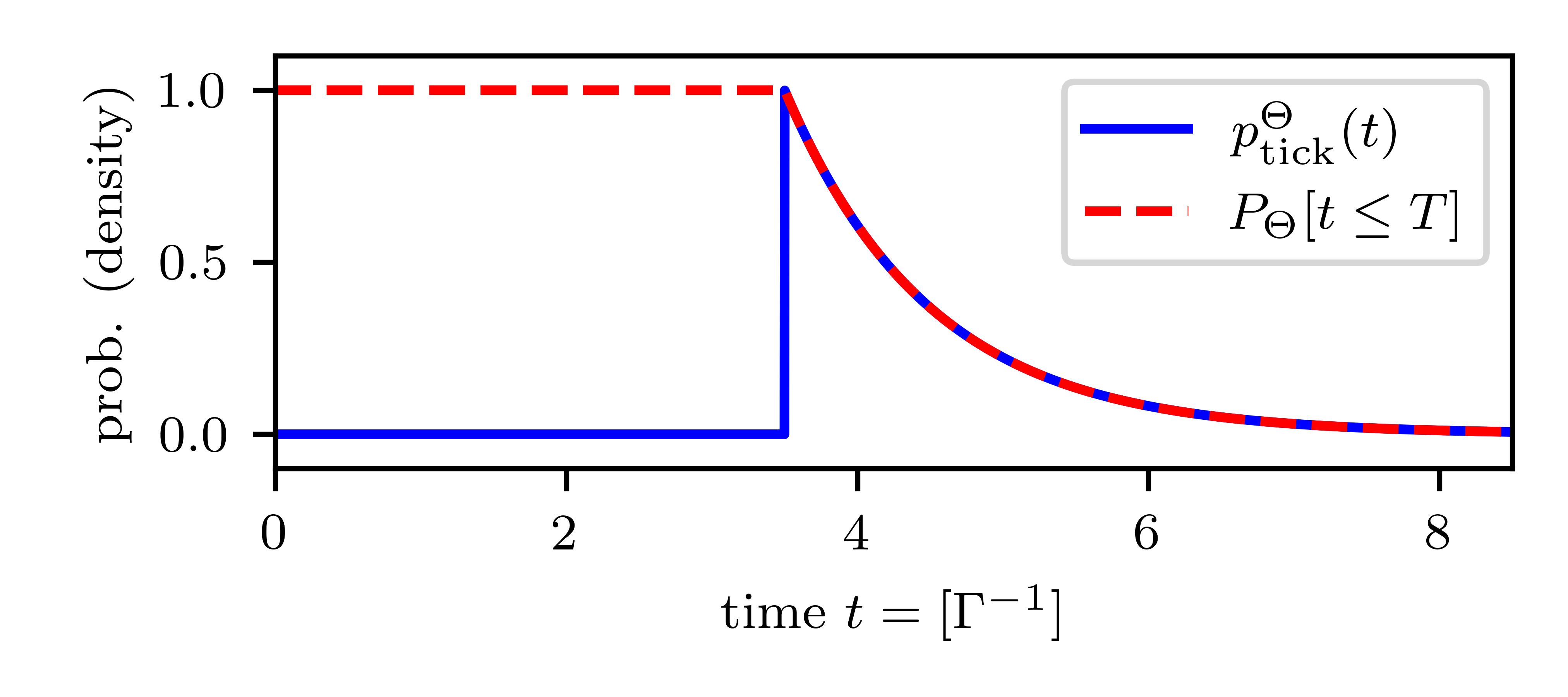}
    \caption{This graph visualizes the ticking statistics for the Heaviside population. The solid line in the above figure visualizes the tick probability density, whereas the dashed line stands for the cumulative non-tick probability. The time axis scales in inverse units of the decay rate $\Gamma$ and the population in the example is chosen such that on average the decay occurs at $\mu=4.5 \,\Gamma$.}
    \label{fig:heaviside_family}
\end{figure}
\paragraph*{Properties of the Heaviside population.}
Such a top-level population is unphysical because of its discontinuity at time $t=t_0.$ This is not an obstacle, though, as we only use the Heaviside population to prove an upper bound for the accuracy of a decay clock. No claim is made whether we can physically obtain the Heaviside population.
The non-tick probability $P_\Theta[t\leq T]$ and tick probability density $p_\text{tick}^\Theta(t)$ associated to this population are given by (see eq.~\eqref{eq:non_tick_prob_general_decay})
\begin{align}
    P_\Theta[t\leq T] =
    \begin{cases}
        1 & t\leq t_0 \\
        e^{-\Gamma(t-t_0)} & t\geq t_0,
    \end{cases}
\end{align}
and
\begin{align}
    p_\text{tick}^\Theta(t) =
    \begin{cases}
        0 & t\leq t_0 \\
        \Gamma e^{-\Gamma(t-t_0)} & t\geq t_0.
    \end{cases}
\end{align}
See Fig.~\ref{fig:heaviside_family} for a visualization of these functions.
Using the analytic solution of the integral over an exponential, $\int_0^\infty dx e^{-ax} = 1/a,$ we can calculate the accuracy and resolution for clocks working the Heaviside population.
\begin{restatable}[]{lemma}{heavisideaccuracyresolution}
\label{lemma:heavisideaccuracyresolution}
The Heaviside population $p^\Theta(t)=\Theta(t-t_0)$ has an accuracy and resolution given by
\begin{align}
\label{eq:resolution_heaviside}
    N=\left(1+\Gamma t_0\right)^2,\text{ and }\,\nu=\frac{1}{t_0+\frac{1}{\Gamma}},
\end{align}
leading to an accuracy-resolution relation
\begin{align}
    N=\frac{\Gamma^2}{\nu^2}.
\end{align}
\end{restatable}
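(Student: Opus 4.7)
The plan is to directly compute the mean $\mu$ and variance $\sigma^2$ of the tick-time random variable $T$ distributed according to the shifted-exponential density $p_{\text{tick}}^\Theta$ given just above the lemma, and then read off $\nu = 1/\mu$ and $N = \mu^2/\sigma^2$ by definition. The whole calculation reduces to standard moments of an exponential distribution, so I do not expect any genuine analytic obstacle; the task is really just bookkeeping.

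First I would change variables to $u = t - t_0$ so that the density becomes $\Gamma e^{-\Gamma u}$ supported on $u \geq 0$ and the random variable is $T = u + t_0$. Linearity of expectation then immediately gives $\mu = \mathbb{E}[u] + t_0 = 1/\Gamma + t_0$, which is exactly the resolution formula $\nu = 1/(t_0 + 1/\Gamma)$ claimed in eq.~\eqref{eq:resolution_heaviside}. For the variance, I would invoke translation invariance of the variance: adding the deterministic constant $t_0$ does not change $\sigma^2$, so $\sigma^2 = \mathrm{Var}(u) = 1/\Gamma^2$. This is the conceptually important step, because it says that the shifted-exponential has the same spread as the pure exponential --- the Heaviside window only \emph{displaces} the probability mass, it does not sharpen it. This is consistent with the proof sketch of the main theorem, which argues that no temporal probability concentration can push $\sigma^2$ below $\Gamma^{-2}$.

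Finally I would combine the two to obtain the accuracy $N = \mu^2/\sigma^2 = \Gamma^2 (t_0 + 1/\Gamma)^2 = (1+\Gamma t_0)^2$, matching the first assertion in eq.~\eqref{eq:resolution_heaviside}. The stated relation $N = \Gamma^2/\nu^2$ then follows from the algebraic identity $1 + \Gamma t_0 = \Gamma(t_0 + 1/\Gamma) = \Gamma/\nu$, which makes it transparent that the Heaviside population \emph{saturates} the accuracy-resolution trade-off, as anticipated by the role it plays in the remainder of the proof of the main theorem. If I wanted to double-check, I could alternatively compute $\mu$ and $\sigma^2$ from the cumulative form $P_\Theta[t \leq T]$ via $\mu = \int_0^\infty P_\Theta[t \leq T]\,dt$ and $\langle T^2\rangle = 2\int_0^\infty t\, P_\Theta[t \leq T]\,dt$, but the substitution approach above is the cleanest.
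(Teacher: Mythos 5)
Your proposal is correct and mirrors the paper's own argument: both compute $\mu=t_0+1/\Gamma$ from the shifted-exponential density and invoke translation invariance of the variance to reduce $\sigma^2$ to the pure-exponential value $1/\Gamma^2$, then combine $N=(\mu/\sigma)^2$ and $\nu=1/\mu$ algebraically. The change of variables $u=t-t_0$ is just a slightly cleaner way of organizing the same computation.
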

Before we prove Lemma~\ref{lemma:heavisideaccuracyresolution}, let us put the Heaviside population into perspective. Given the accuracy-resolution trade-off theorem the equality $N=\Gamma^2/\nu^2$ for clocks with Heaviside population tells us that this top-level population profile is ideal in the accuracy-resolution sense. This is no coincidence: a Heaviside population with offset at $t_0$ requires a clock in the background which perfectly knows $t_0$. The population then describes a ladder conditioned on not having decayed, whose top-level is populated precisely at $t=t_0,$ and stays there. This would require instantaneous external driving at $t=t_0$, i.e., a perfect background clock. The exponential decay channel coupled to the top-level then smears out the Heaviside profile of $p^\Theta(t)$ into an exponential decaying tick probability density $p_\text{tick}^\Theta(t)$ of width $1/\Gamma,$ giving rise to the results of Lemma~\ref{lemma:heavisideaccuracyresolution}.
\begin{proof}
The average tick time is given by an integral over $t p_\text{tick}^\Theta(t)$ and splits into two parts
\begin{align}
    \mu=\int_0^\infty dt\, t p_\text{tick}^\Theta(t) &= 0+\int_{t_0}^\infty dt\, t e^{-\Gamma(t-t_0)}\\
    &=t_0+\frac{1}{\Gamma}.
\end{align}
Thus, the resolution is given by $\nu=\left(t_0 + \frac{1}{\Gamma}\right)^{-1}$.
For the accuracy, which is defined as $N=(\mu/\sigma)^2,$ we only need to calculate the variance $\sigma^2.$ The variance, however, is invariant under translations of the tick probability density and, thus, one can calculate the variance without loss of generality for $t_0=0,$
\begin{align}
    \sigma^2 &= \int_0^\infty dt\, t^2 e^{-\Gamma t} -\mu^2\\
    &=\frac{2}{\Gamma^2}-\frac{1}{\Gamma^2}=\frac{1}{\Gamma^2}.\label{eq:variance_heaviside}
\end{align}
Expressing the accuracy in terms of $t_0,$ we find $N = (1+\Gamma t_0)^2$, that is, the accuracy increases with higher offset $t_0$. There is a tradeoff, though: The greater $t_0,$ the lower the resolution. Eliminating the $t_0$-dependency in the equations, we can establish an accuracy-resolution relation for the family of Heaviside populations given by $N=\Gamma^2/{\nu^2}.$
\end{proof}
\paragraph*{Analytic preliminaries.}
The tick probability density $p_\text{tick}(t)$ reveals how likely it is for a tick to happen during a given time interval. By the fundamental theorem of calculus, the cumulative non-tick probability $P[t\leq T]$ (whose negative derivative is $p_\text{tick}(t)$) contains the same information, and for proving the accuracy-resolution trade-off, the latter function turns out to be useful. Let us collect some general identities for the non-tick probability $P[t\leq T]$ which for some function $p(t)$ (be reminded, that the superscript \textit{no tick} is suppressed) is given by
\begin{align}
\label{eq:integral_cumulative}
    P[t\leq T]=e^{-\Gamma\int_0^t d\tau\,p(\tau)}.
\end{align}
There is a general constraint (Lemma~\ref{lemma:constraintcumulative}) on how fast the non-tick probability decays. The fact that $p(t)\in[0,1]$ ensures that $P[t\leq T]$ is monotonically decreasing (because the integral $\int_0^t d\tau\, p(\tau)$ is monotonically increasing) but does this not faster than exponentially.
\begin{restatable}{lemma}{constraintcumulative}
\label{lemma:constraintcumulative}
For all $t>0$ and $s>0$, the cumulative non-tick probability satisfies the following inequalities:
\begin{align}
    P[t\leq T]e^{-\Gamma s}\leq P[t+s\leq T]\leq P[t\leq T].
\end{align}
\end{restatable}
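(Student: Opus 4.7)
The plan is to exploit the multiplicative structure of the exponential form of $P[t\leq T]$ given in eq.~\eqref{eq:integral_cumulative} and reduce both inequalities to elementary bounds on the integral of $p(\tau)$ over the interval $[t,t+s]$. Since $p$ takes values in $[0,1]$, integrating over a window of length $s$ yields a number between $0$ and $s$, and this is essentially all the input the proof requires.

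First I would split the integral in the exponent as $\int_0^{t+s}d\tau\,p(\tau)=\int_0^t d\tau\,p(\tau)+\int_t^{t+s}d\tau\,p(\tau)$, which factorises the cumulative non-tick probability as
\begin{align*}
P[t+s\leq T]=P[t\leq T]\,\exp\!\Bigl(-\Gamma\!\int_t^{t+s}\!d\tau\,p(\tau)\Bigr).
\end{align*}
The upper bound then follows because $p(\tau)\geq 0$ makes the remaining integral non-negative, so the exponential factor is at most $1$. The lower bound follows because $p(\tau)\leq 1$ bounds the remaining integral by $s$, and monotonicity of the exponential then gives the factor $e^{-\Gamma s}$. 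Putting the two bounds together yields the claimed sandwich inequality.

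There is no real obstacle here: the lemma is essentially a restatement of the fact that $P[t\leq T]$ is monotonically non-increasing and decays no faster than a pure exponential of rate $\Gamma$, which in turn is encoded in the constraint $p(\tau)\in[0,1]$ inherited from the definition of $\Gamma$ as $\|V\|_{\max}$ in eq.~\eqref{eq:def_gamma}. The only subtlety worth flagging is making explicit that the range $p(\tau)\in[0,1]$ is precisely what is needed for both directions simultaneously: positivity of $V$ ensures the upper bound, while the normalisation by $\Gamma$ ensures the lower bound. This lemma will presumably be invoked later in the appendix to compare arbitrary populations $p(t)$ against the Heaviside population $p^\Theta(t)$ and to argue that variances cannot fall below the exponential benchmark $\Gamma^{-2}$ from eq.~\eqref{eq:variance_heaviside}.
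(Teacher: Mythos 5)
Your proof is correct and is exactly the argument the paper leaves implicit: the paper's proof of Lemma~\ref{lemma:constraintcumulative} consists of the single sentence that the result ``is a consequence of eq.~\eqref{eq:integral_cumulative} and the fact that $0\leq p(t)\leq 1$,'' and your multiplicative factorisation $P[t+s\leq T]=P[t\leq T]\exp(-\Gamma\int_t^{t+s}p)$ together with the two-sided bound $0\leq\int_t^{t+s}p\leq s$ is precisely the intended elaboration. Nothing to add.
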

\begin{proof}
This is a consequence of eq.~\eqref{eq:integral_cumulative} and the fact that $0\leq p(t)\leq 1.$
\end{proof}
The cumulative non-tick probability can be used to calculate the moments of the probability density $p_\text{tick}(t)$ (see Definition \ref{def:moments} and Lemma~\ref{lemma:momentsidentity}) via the relation $p_\text{tick}(t)=-\dot{P}[t\leq T]$ and partial integration. It is important to note, that this discussion only makes sense for clocks that tick with certainty, i.e., $\lim_{t\rightarrow\infty}P[t\leq T]=0$. This need not be true for all clocks, however, whenever we talk about a \textit{tick probability density}, we implicitly assume that we have a properly normalized probability density in the probability theoretic sense \cite{Klenke2020}. 
\begin{restatable}[Moments]{definition}{moments}
\label{def:moments}
For a given tick probability density $p_\text{tick}(t)$, define its $k$-th moment as
\begin{align}
    t_k:=\int_0^\infty dt\, t^kp_\text{tick}(t).
\end{align}
\end{restatable}
\begin{restatable}[Tick probability moments]{lemma}{momentsidentity}
\label{lemma:momentsidentity}
    The $k$-th moment of $p_\text{tick}(t)$ is related to $P[t\leq T]$ by the integral
    \begin{align}
        t_k = k\int_0^\infty dt\, t^{k-1}P[t\leq T].
    \end{align}
\end{restatable}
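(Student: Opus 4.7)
The plan is to prove this by integration by parts, since the relation between $p_\text{tick}(t)$ and $P[t\leq T]$ is a simple derivative. First I would use the identity $p_\text{tick}(t) = -\tfrac{d}{dt}P[t\leq T]$, which follows directly from the fact that $P[t\leq T]$ is the complementary cumulative distribution function of $T$ and is assumed continuously differentiable. Substituting into the moment definition gives
\begin{align*}
t_k = \int_0^\infty dt\, t^k p_\text{tick}(t) = -\int_0^\infty dt\, t^k \frac{d}{dt}P[t\leq T].
\end{align*}

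Next I would apply integration by parts with $u = t^k$ and $dv = \tfrac{d}{dt}P[t\leq T]\,dt$, yielding
\begin{align*}
t_k = -\Bigl[t^k P[t\leq T]\Bigr]_0^\infty + k\int_0^\infty dt\, t^{k-1}P[t\leq T].
\end{align*}
The boundary term at $t=0$ vanishes for $k\geq 1$ because $t^k = 0$. The boundary term at $t\to\infty$ is the delicate part of the argument, and this is the step I expect to require the most care.

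To handle the upper boundary, I would invoke the standing assumption that the clock ticks with certainty, i.e., $P[t\leq T]\to 0$ as $t\to\infty$, together with the assumption that the moment $t_k$ exists and is finite. A standard fact from probability theory (a consequence of the dominated convergence theorem applied to $t^k p_\text{tick}(t)$) is that whenever the $k$-th moment of a positive random variable is finite, $t^k P[t\leq T]\to 0$ as $t\to\infty$. Alternatively, one could note that Lemma~\ref{lemma:constraintcumulative} and the form of eq.~\eqref{eq:integral_cumulative} together with a lower bound on the long-time accumulated population $\int_0^t p(\tau)d\tau$ would yield exponential decay of $P[t\leq T]$ in physically relevant cases, making the boundary term trivially zero. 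With the boundary term gone, the identity $t_k = k\int_0^\infty t^{k-1}P[t\leq T]\,dt$ follows immediately.

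Finally, for completeness I would briefly remark that the $k=1$ case reproduces the well-known identity $\mu = \int_0^\infty P[t\leq T]\,dt$ for the mean of a non-negative random variable, serving as a sanity check of the general formula.
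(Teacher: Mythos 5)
Your proof is correct and takes essentially the same route as the paper: integration by parts on $t_k = -\int_0^\infty t^k\,\tfrac{d}{dt}P[t\leq T]\,dt$, with the boundary term at infinity handled by the ticking-with-certainty assumption. In fact you are slightly more careful than the paper's one-line argument, since you correctly observe that $P[t\leq T]\to 0$ alone does not suffice and that one actually needs $t^k P[t\leq T]\to 0$, which you justify via finiteness of the moment (or exponential decay of $P[t\leq T]$).
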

\begin{proof}
This is partial integration and for the boundary conditions, we use the assumption that $\lim_{t\rightarrow\infty}P[t\leq T]=0$.
\end{proof}
In particular, we can apply this result to $\mu=t_1$ and $\sigma=t_2-\mu^2$, two expressions that are essential in Section \ref{sec:upper_bound_proof}, where we prove the upper bound. The average tick time $\mu=t_1$ can be written as the area below the graph of $P[t\leq T],$
\begin{align}
\label{eq:first_moment_identity}
    \mu=t_1 = \int_0^\infty dt\, t p_\text{tick}(t) = \int_0^\infty dt\, P[t\leq T].
\end{align}
The second moment $t_2,$ on the other hand, is related to the center of mass of the graph for $P[t\leq T]$, up to normalization,
\begin{align}
\label{eq:second_moment_identity}
    t_2 = \int_0^\infty dt\, t^2 p_\text{tick}(t) = 2\int_0^\infty dt\, t P[t\leq T].
\end{align}

\subsection{Proof construction}
\label{sec:upper_bound_proof}
The introduction of the Heaviside population in Definition \ref{def:heavisidefamily} together with the result in Lemma~\ref{lemma:heavisideaccuracyresolution}, that these populations achieved the (claimed) optimal accuracy resolution relation, leads us to the following approach in proving the inequality $N\leq\Gamma^2/\nu^2$: we try to show that all clocks are worse than the one with a Heaviside population, which has essentially a perfect background clock. In that sense, what we are showing is that no clock is better than the one that is already perfect. The only premise we have for the proof is that the clocks we consider eventually tick (i.e., $p_\text{tick}(t)$ is a valid probability density) and that their evolution is continuous,\footnote{This ensures we are doing a fair comparison, otherwise, we'd have to discuss how to compare clocks that possibly never tick to ones that always tick. We reserve that discussion for future work.} we call this \textit{well-behaved}. That being said, our strategy to prove the upper bound consists of the following three steps:
\begin{enumerate}[label=(\roman*)]
    \item We show that for any well-behaved top-level population $p(t)$, there exists a Heaviside-$\Theta$ type top-level population with the same average tick time.
    \item Then, we argue that the variance of the tick probability density coming from the Heaviside-$\Theta$ top-level population lower-bounds the variance coming from the generic population $p(t)$.
    \item We conclude that any well-behaved top-level population $p(t)$ must have an accuracy upper bounded by that of the Heaviside population and by using Lemma~\ref{lemma:heavisideaccuracyresolution} on the properties of the Heaviside population, we have $N\leq \Gamma^2/\nu^2$. Once we are there, we have proven the accuracy-resolution trade-off theorem.
\end{enumerate}
\paragraph*{Step (i).}
Begin with a generic, but well behaved top-level population $p(t).$ Let $\mu$ be its average tick time and define $t_0:=\mu-1/\Gamma$. The top-level population
\begin{align}
\label{eq:definition_heaviside_toplevel}
    p^\Theta(t)=\Theta(t-t_0)
\end{align}
has the same first moment $t_1=\mu$ as the generic top-level population.\footnote{The average tick time is given by $\mu^\Theta=t_0+1/\Gamma,$ with $t_0$ coming from the time translation of the Heaviside-$\Theta$ function and $1/\Gamma$ coming from the exponential decay.} For the results from Lemma~\ref{lemma:heavisideaccuracyresolution} to carry over, we need $t_0>0.$ This is generally true as Lemma~\ref{lemma:resolutionbound} guarantees. In fact, it tells us (see Fig.~\ref{fig:cumulative_sketch}) that the average tick time of any well-behaved top-level probability can not be smaller than $1/\Gamma$ and that the best resolution is achieved by the top-level population which is constantly one.
\begin{restatable}[Resolution upper bound]{lemma}{resolutionbound}
\label{lemma:resolutionbound}
For any well-behaved top-level population $p : \mathbb{R}_{\geq 0}\rightarrow [0,1]$, the induced resolution $\nu$ cannot be greater than $\Gamma$.
\end{restatable}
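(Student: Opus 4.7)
My plan is to combine the integral identity for the mean tick time (eq.~\eqref{eq:first_moment_identity}) with the pointwise bound $p(\tau)\le 1$ on the top-level population. Concretely, since $\nu=1/\mu$, proving $\nu\le\Gamma$ amounts to proving the lower bound $\mu\ge 1/\Gamma$ on the average tick time, so the whole task reduces to a single integral comparison.

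The first step is to apply the identity $\mu=\int_0^\infty P[t\le T]\,dt$ from Lemma~\ref{lemma:momentsidentity} (well-posed because a well-behaved clock ticks with certainty, so $P[t\le T]\to 0$ as $t\to\infty$). Next, I use the explicit exponential form $P[t\le T]=\exp(-\Gamma\int_0^t p(\tau)\,d\tau)$ of eq.~\eqref{eq:integral_cumulative}. The hypothesis $p:\mathbb{R}_{\ge 0}\to[0,1]$ gives $\int_0^t p(\tau)\,d\tau\le t$ for every $t\ge 0$, and since $x\mapsto e^{-\Gamma x}$ is monotonically decreasing, this yields the pointwise lower bound $P[t\le T]\ge e^{-\Gamma t}$. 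This is the same kind of monotonicity observation already used in Lemma~\ref{lemma:constraintcumulative}, just in the form needed here.

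Integrating this pointwise bound in $t$ gives
\begin{align*}
\mu \;=\; \int_0^\infty P[t\le T]\,dt \;\ge\; \int_0^\infty e^{-\Gamma t}\,dt \;=\; \frac{1}{\Gamma},
\end{align*}
from which $\nu=1/\mu\le\Gamma$ follows immediately. Intuitively, the extremal case is saturated by the constant population $p\equiv 1$, which corresponds to pure exponential decay at rate $\Gamma$ with no temporal probability concentration at all; any modulation reducing $p$ below one can only delay the tick and therefore only lower the resolution.

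I do not expect any real obstacle here: the argument is essentially a one-line monotonicity estimate under the integral, and all the needed ingredients (the explicit form of $P[t\le T]$, the moment identity, and the range of $p$) are already established in the preceding material. The only subtlety worth flagging explicitly is the ``well-behaved'' assumption, which is exactly what guarantees that $\mu$ exists as a finite expectation and that the partial integration leading to the moment identity has vanishing boundary term at infinity; without it, the statement $\nu\le\Gamma$ would have to be interpreted as including the degenerate case $\mu=\infty$, i.e.\ $\nu=0\le\Gamma$, which is vacuously true anyway.
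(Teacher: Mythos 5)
Your argument is correct and matches the paper's proof essentially line for line: both reduce the claim to $\mu\ge 1/\Gamma$, invoke the first-moment identity $\mu=\int_0^\infty P[t\le T]\,dt$, and bound $P[t\le T]\ge e^{-\Gamma t}$ using $p(\tau)\le 1$. You have merely spelled out the monotonicity step that the paper leaves implicit.
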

\begin{proof}
Equivalently to the statement in the Lemma, we can prove that $\mu\geq 1/\Gamma.$ For this matter, use eq.~\eqref{eq:first_moment_identity}, to estimate the average tick time
\begin{align}
    \mu = \int_0^\infty dt\, P[t\leq T]\geq \int_0^\infty dt\, e^{-\Gamma t}=1/\Gamma.
\end{align}
This concludes the proof.
\end{proof}
\begin{figure}
    \centering
    \includegraphics[width=\columnwidth]{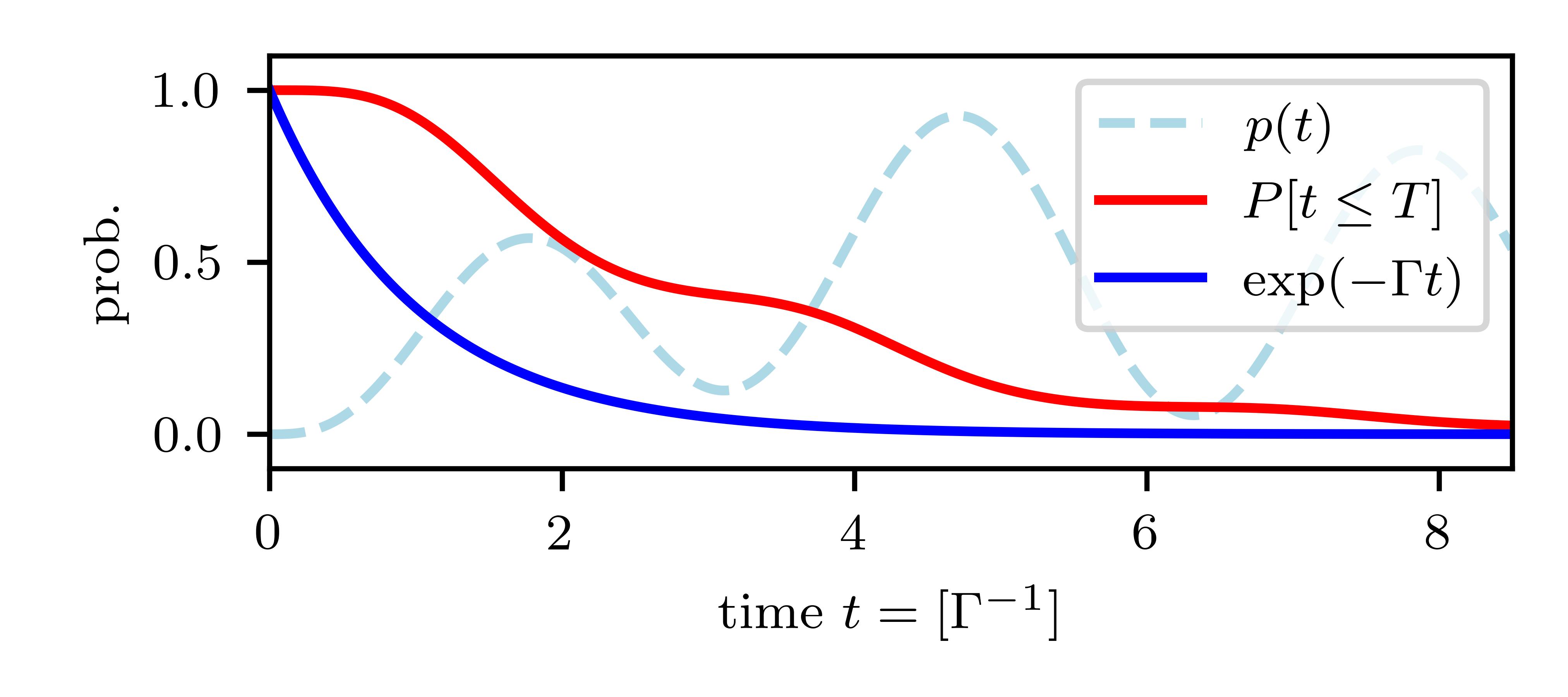}
    \caption{The sketch shows the time-dependency of a generic top-level population $p(t)$ together with the associated cumulative non-tick probability $P[t\leq T]$ compared to the non-tick probability $e^{-\Gamma t}$ of the top-level population which is constantly $1$. A top-level population smaller than unity leads to a slower decay of the non-tick probability, which is responsible for a lower bound on the average tick-time given by $1/\Gamma$.}
    \label{fig:cumulative_sketch}
\end{figure}
\paragraph*{Step (ii).}
If $p(t)= p^\Theta (t),$ there is nothing to be shown, because $p^\Theta(t)$ achieves the minimal tick time variance $\sigma^2=1/\Gamma^2$ (see eq.~\eqref{eq:variance_heaviside}). Hence, we assume from now on $p(t)\neq p^\Theta (t)$ to avoid this pathological case. Proposition~\ref{prop:coincidenceprop} in Section \ref{sec:heaviside_details} further discusses, that there exists a unique $t_*>t_0=\mu-1/\Gamma,$ such that $P[t_*\leq T]=e^{-\Gamma(t_*-t_0)}$ (see Fig.~\ref{fig:cumulative_area}). For all times $t<t_*,$ the non-tick probability for the generic top-level population is smaller than that of the Heaviside population, i.e., $P[t\leq T]\leq P_\Theta[t\leq T].$ In the generic case, a tick before $t_*$ is therefore more likely than for the Heaviside case. On the other hand, for all $t>t_*,$ we have $P[t\leq T]\geq P_\Theta[t\leq T].$ That is, in the generic case, it is also more likely that after $t_*$ the tick did not happen. Formally, the variance of the tick signal is bigger for a generic population than for the Heaviside one (Proposition~\ref{prop:genericvariance}).
\begin{restatable}{proposition}{genericvariance}
\label{prop:genericvariance}
For any well-behaved top-level population $p : \mathbb{R}_{\geq 0}\rightarrow [0,1]$, the variance of the tick probability density $p_\text{tick}(t)$ is lower-bounded by that of $p_\text{tick}^\Theta(t),$ the tick probability density coming from the Heaviside top-level population $p^\Theta(t)$ defined in eq.~\eqref{eq:definition_heaviside_toplevel}.
\end{restatable}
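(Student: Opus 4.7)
The plan is to exploit that, by construction of $t_0 = \mu - 1/\Gamma$ in step~(i), the tick probability densities $p_\text{tick}(t)$ and $p_\text{tick}^\Theta(t)$ have the \emph{same} first moment $\mu$; hence $\sigma^2 \geq \sigma_\Theta^2$ is equivalent to the inequality $t_2 \geq t_2^\Theta$ between the second moments, which by Lemma~\ref{lemma:momentsidentity} reduces the goal to proving
\begin{equation*}
\int_0^\infty dt\, t\, \bigl( P[t \leq T] - P_\Theta[t \leq T] \bigr) \geq 0.
\end{equation*}
The pathological case $p(t) = p^\Theta(t)$ is excluded at the outset, and Lemma~\ref{lemma:resolutionbound} guarantees $t_0 \geq 0$ so that $p^\Theta$ is well defined.

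Next I would re-use the equal-mean condition itself, which via eq.~\eqref{eq:first_moment_identity} reads $\int_0^\infty dt\, ( P[t \leq T] - P_\Theta[t \leq T] ) = 0$. This lets me subtract an arbitrary constant multiple of this vanishing integral from the one above without changing its value. Choosing the shift to be the unique crossing time $t_*$ furnished by Proposition~\ref{prop:coincidenceprop} transforms the quantity of interest into
\begin{equation*}
\int_0^\infty dt\, (t - t_*)\, \bigl( P[t \leq T] - P_\Theta[t \leq T] \bigr),
\end{equation*}
which is precisely the form where the single-crossing structure can be exploited.

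To finish, I would carry out a pointwise sign analysis. For $t < t_*$, Proposition~\ref{prop:coincidenceprop} yields $P[t \leq T] \leq P_\Theta[t \leq T]$ while $t - t_* < 0$, so the integrand is a product of two nonpositive factors; for $t > t_*$, both inequalities reverse and the integrand is a product of two nonnegative factors. Either way the integrand is pointwise nonnegative, so the integral is as well, establishing $\sigma^2 \geq 1/\Gamma^2 = \sigma_\Theta^2$. The main obstacle is genuinely the single-crossing statement of Proposition~\ref{prop:coincidenceprop}; once that is granted, as the excerpt does, the rest is a clean ``shift-by-the-crossing-point'' argument that converts the equal-mean condition into positivity of the second-moment gap.
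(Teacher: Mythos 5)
Your proposal is correct and is essentially the paper's own argument: reduce to comparing second moments via the moment identity, then use the single-crossing point $t_*$ from Proposition~\ref{prop:coincidenceprop} to show $\int_0^\infty t\,\Delta P(t)\,dt \geq 0$. Your "shift by $t_*$ to get a pointwise nonnegative integrand" is a marginally tidier phrasing of the same step the paper carries out by splitting the integral at $t_*$ and bounding each piece below by $t_*\int\Delta P$; the mathematical content is identical.
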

\begin{proof} By construction both $p_\text{tick}(t)$ and $p_\text{tick}^\Theta(t)$ have the same average tick time $\mu$. Because the variance of the tick time is given by the difference $\sigma^2=t_2-\mu^2$, the following two statements are equivalent,
\begin{align}
    \sigma^2 \geq \sigma^2_\Theta \Leftrightarrow t_2 \geq t_2^\Theta.
\end{align}
The second moments on the right-hand side can be calculated as the center of mass of the respective non-tick probabilities $P[t\leq T]$ and $P_\Theta[t\leq T]$ (up to constant prefactors, see also eq.~\eqref{eq:second_moment_identity}). Therefore, the question we have to answer is, whether the graph of $P[t\leq T]$ has a center of mass at larger values of $t$ then that of $P_\Theta[t\leq T]?$
The two areas $A_1$ and $A_2$ in Fig.~\ref{fig:cumulative_area}
\begin{figure}
    \centering
    \includegraphics[width=\columnwidth]{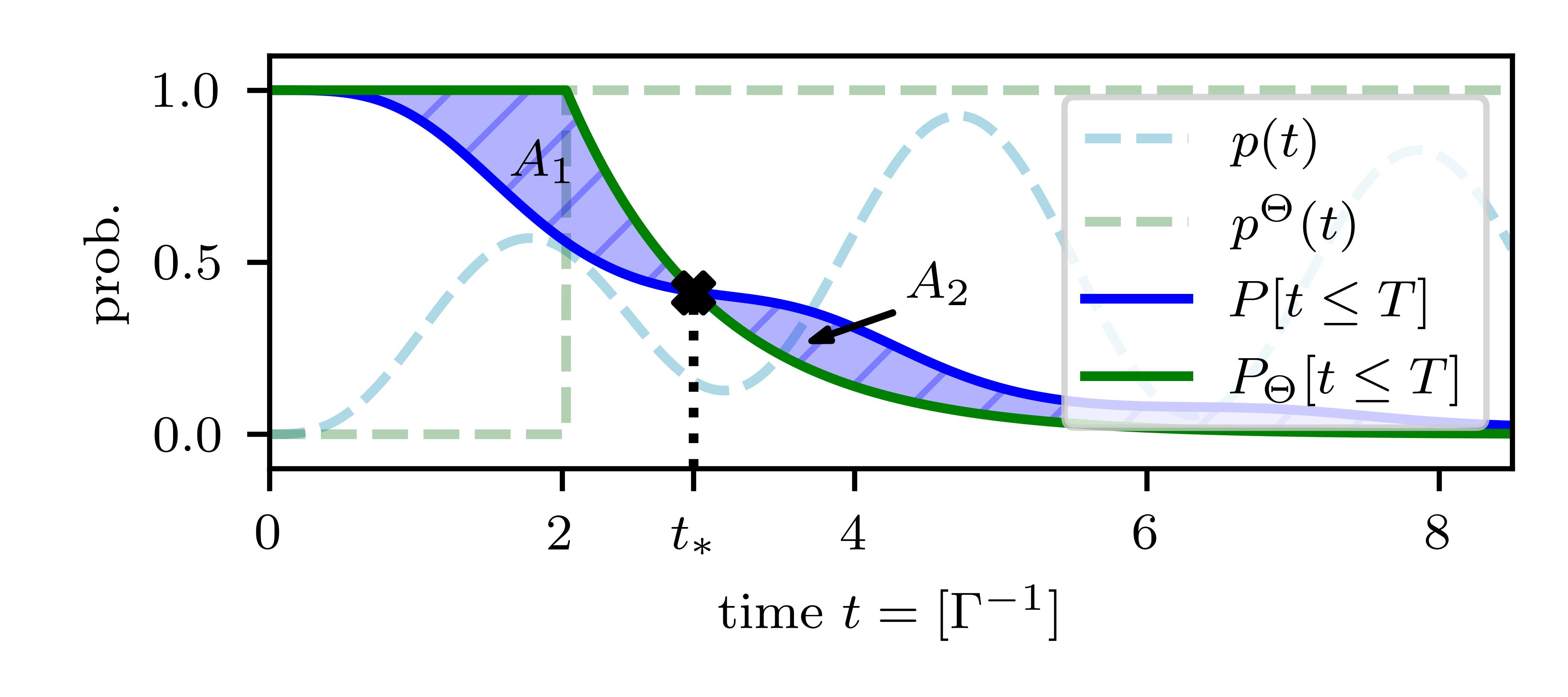}
    \caption{The dashed lines show the top-level populations for the generic (blue) and the Heaviside case (green). The induced cumulative non-tick probabilities $P[t\leq T]$ and $P_\Theta[t\leq T]$ are drawn in the same color but as solid lines. The time $t_*$ marks the unique crossing point of the two non-tick probabilities. On its right and left, the two areas $A_1$ and $A_2$ are defined by the difference between the two non-tick probabilities.}
    \label{fig:cumulative_area}
\end{figure}
are equal and they correspond to the difference in the area below the graphs of $P[t\leq T]$ and $P_\Theta[t\leq T]$.
The graph of $P[t\leq T]$ has less `mass' (i.e., area) on the interval $0\geq t \geq t_*$ than $P_\Theta[t\leq T]$.
For $t_*\geq t \geq \infty$, however, $P[t\leq T]$ has more `mass', resulting in a center of mass further to the right for $P[t\leq T]$ than for $P_\Theta[t\leq T].$
Formally,
\begin{align}
    \frac{t_2}{2}&=\int_0^\infty dt\, t P[t\leq T] \\
    &= \int_0^\infty dt\, t P_\Theta[t\leq T] + \int_0^\infty dt\, t \underbrace{\left(P[t\leq T]-P_\Theta[t\leq T]\right)}_{:=\Delta P(t)}\\
    &\geq \int_0^\infty dt\, t P_\Theta[t\leq T] = \frac{t_2^\Theta}{2},
\end{align}
where the inequality comes about because the integral over $\Delta P(t)$ is greater equal than zero,
\begin{align}
    \int_0^\infty dt\, t \Delta P(t)
    &=\int_0^{t_*} dt\, t \Delta P(t)+\int_{t_*}^\infty dt\, t \Delta P(t)\\
    &\geq \int_0^{t_*} dt\, t_* \Delta P(t) + \int_{t_*}^\infty dt\, t_* \Delta P(t)=0.
\end{align}
All in all, this shows that $t_2\geq t_2^\Theta,$ which, by the initial remark, proves the proposition.
\end{proof}
\paragraph*{Step (iii).}
Combining all the results from the previous two steps, we proof the Theorem.
\begin{proof}
The accuracy $N$ is defined as $N=(\mu/\sigma)^2,$ and the resolution $\nu$ as $\nu=1/\mu.$ We can therefore express
\begin{align}
    N=\left(\frac{\mu}{\sigma}\right)^2=\frac{1}{\nu^2\sigma^2}.
\end{align}
From Proposition~\ref{prop:genericvariance} we know that $\sigma^2\geq \sigma^2_\Theta$, but $\sigma^2_\Theta = 1/\Gamma^2,$ as we derived in Lemma~\ref{lemma:heavisideaccuracyresolution} (see eq.~\eqref{eq:variance_heaviside} in the proof). In conclusion,
\begin{align}
    N=\frac{1}{\nu^2 \sigma^2}\leq \frac{1}{\nu^2 \sigma^2_\Theta}=\frac{\Gamma^2}{\nu^2},
\end{align}
which proves the claim, and thereby the optimality of the family of Heaviside-$\Theta$ top-level populations.
\end{proof}

\subsection{Details on the optimality of the Heaviside population}
\label{sec:heaviside_details}
In Step (ii) of Section \ref{sec:upper_bound_proof}, the claim is made that there exists a unique $t_*\geq t_0=\mu-1/\Gamma,$ such that the two non-tick probabilities coincide $P[t_*\leq T]=e^{-\Gamma(t_*-t_0)}.$ To be more precise, we formulate
\begin{restatable}{proposition}{coincidenceprop}
\label{prop:coincidenceprop}
The set $\mathcal{S}=\{t\geq 0 : P[t\leq T]=  P_\Theta[t\leq T]\}$ is of the form
\begin{equation}
    \mathcal{S}=[0,a] \sqcup \{t_*\},
\end{equation}
where $t_*>t_0,$ and $[0,a]$ is an interval with upper bound $a<t_0.$
\end{restatable}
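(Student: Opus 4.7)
The plan is to reduce the analysis to a single auxiliary function. Define
\begin{equation*}
h(t) := \int_0^t p(\tau)\, d\tau - \max(0, t - t_0),
\end{equation*}
so that $P[t \leq T] = P_\Theta[t \leq T]\,e^{-\Gamma h(t)}$ and hence $\mathcal{S} = \{t \geq 0 : h(t) = 0\}$. The structure of $h$ is immediate: $h(0) = 0$; on $[0, t_0)$ one has $h' = p \geq 0$, so $h$ is non-decreasing; on $(t_0, \infty)$ one has $h' = p - 1 \leq 0$, so $h$ is non-increasing. Thus $h$ peaks at $t = t_0$, and the two pieces $\mathcal{S} \cap [0, t_0]$ and $\mathcal{S} \cap (t_0, \infty)$ can be analyzed separately.

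On $[0, t_0]$, monotonicity and continuity force $\mathcal{S} \cap [0, t_0] = [0, a]$ for some $a \in [0, t_0]$. To rule out $a = t_0$, I would argue as follows. If $a = t_0$, then $h(t_0) = 0$ forces $p \equiv 0$ on $[0, t_0]$; combining the mean-matching relation $\mu = t_0 + 1/\Gamma$ built into the choice of $t_0$ with eq.~\eqref{eq:first_moment_identity}, one gets $\int_{t_0}^\infty \exp(-\Gamma \int_{t_0}^t p)\, dt = 1/\Gamma$. The pointwise bound $p \leq 1$ yields $\exp(-\Gamma \int_{t_0}^t p) \geq e^{-\Gamma(t - t_0)}$, and equality of the two integrals forces $p \equiv 1$ on $(t_0, \infty)$, so $p = p^\Theta$ --- contradicting the standing assumption. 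Hence $a < t_0$ strictly, and in particular $h(t_0) > 0$.

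On $(t_0, \infty)$, $h$ is continuous, non-increasing, and starts at the positive value $h(t_0)$. A second appeal to mean-matching shows $h$ must go negative: the integral $\int_0^{t_0}(P - P_\Theta)\, dt$ is strictly negative (because $h > 0$ on $(a, t_0]$), and $\int_0^\infty(P - P_\Theta)\, dt = 0$ then forces $h < 0$ on a set of positive measure in $(t_0, \infty)$. The intermediate value theorem combined with monotonicity makes $\mathcal{S} \cap (t_0, \infty)$ a closed interval containing some $t_* > t_0$.

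The hardest part, which I expect to be the main obstacle, is collapsing that closed interval to a single point $\{t_*\}$. A degenerate scenario $h \equiv 0$ on $[b, c] \subset (t_0, \infty)$ would force $p \equiv 1$ there, i.e.\ the no-tick state would sit exactly in the top eigenspace of $V$ for a finite time --- a non-generic feature of the Lindblad evolution in eq.~\eqref{eq:no_tick_state_evolution}, excluded whenever $\mathcal{L}_\text{no tick}$ does not preserve that eigenspace. Under this mild genericity assumption, strict monotonicity of $h$ near $t_*$ pins the zero set down to a single point; in the borderline case the proposition should be read with $t_*$ any representative of the zero interval, since the sign structure of $h$ ($h > 0$ for $t < t_*$ and $h < 0$ for $t > t_*$) that Step~(ii) of the main proof actually invokes follows from monotonicity alone.
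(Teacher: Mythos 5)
Your route is in substance the same as the paper's: both rest on (i) the mean-matching constraint $\int_0^\infty\bigl(P[t\le T]-P_\Theta[t\le T]\bigr)\,dt=0$ built into the choice $t_0=\mu-1/\Gamma$, and (ii) the observation that $\log\bigl(P/P_\Theta\bigr)$ is monotone on each side of $t_0$ because $0\le p\le 1$. Encoding (ii) in the single auxiliary function $h(t)=\int_0^t p(\tau)\,d\tau-\max(0,t-t_0)$ is a cleaner way to organize what the paper does in three separate steps (a)--(c) working directly with $P$ and $P_\Theta$. You are also right to single out the uniqueness of $t_*$ as the genuine weak point: the paper's Step~(c) sets $t_*=\inf\{t>t_0:P[t\le T]>P_\Theta[t\le T]\}$ and declares uniqueness, but it never rules out that $h$ sits at zero on a nondegenerate interval $[s,t_*]\subset(t_0,\infty)$, which requires only $p\equiv 1$ there and is not excluded by smoothness of $p$; so as literally written the singleton claim does need either a genericity hypothesis of the kind you suggest or a more careful restatement. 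Your closing remark is the correct repair: the only thing Proposition~\ref{prop:genericvariance} actually uses is $\Delta P(t)\le 0$ for $t\le t_*$ and $\Delta P(t)\ge 0$ for $t\ge t_*$ (note these should be non-strict, not the strict $h>0$, $h<0$ you wrote), and these follow from $h$ being non-increasing on $(t_0,\infty)$ for any representative $t_*$ of the zero set. Two cosmetic points: state explicitly that $[0,a]$ is closed by continuity of $h$, and note that $a=0$ is permitted (e.g.\ if $p(0)>0$), which the phrasing ``interval with upper bound $a<t_0$'' is meant to accommodate.
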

\begin{figure}
    \centering
    \includegraphics[width=\columnwidth]{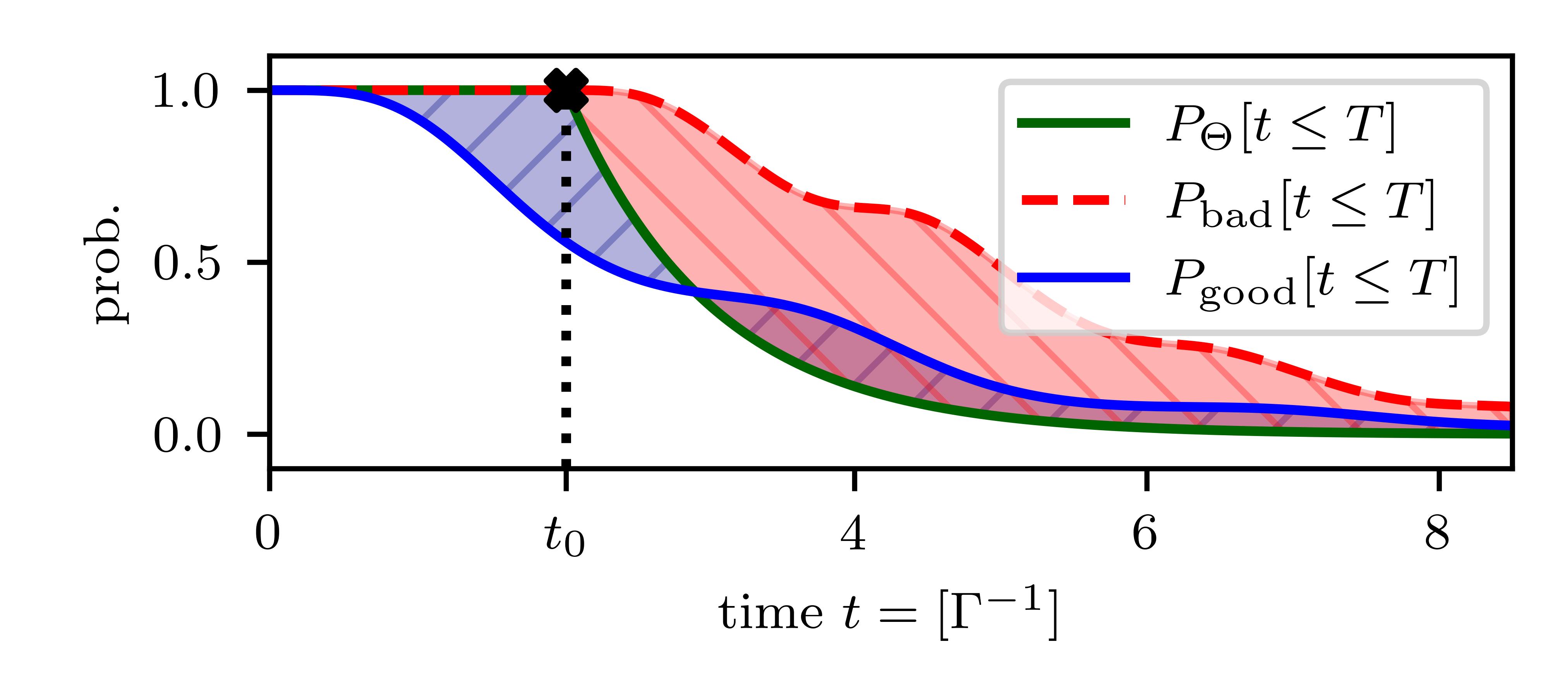}
    \caption{This sketch visualizes the first step of the the proof of Proposition~\ref{prop:coincidenceprop}. The solid green line indicates the non-tick probability for the Heaviside population. We see that for the red dashed line which does not fall below $1$ before $t_0,$ the total integral is greater than that of $P_\Theta[t\leq T],$ contradicting the construction, which ensures that both functions have the same integral. The blue, dashed line shows a plausible function for $P[t\leq T].$}
    \label{fig:a1_step1}
\end{figure}
\begin{figure}
    \centering
    \includegraphics[width=\columnwidth]{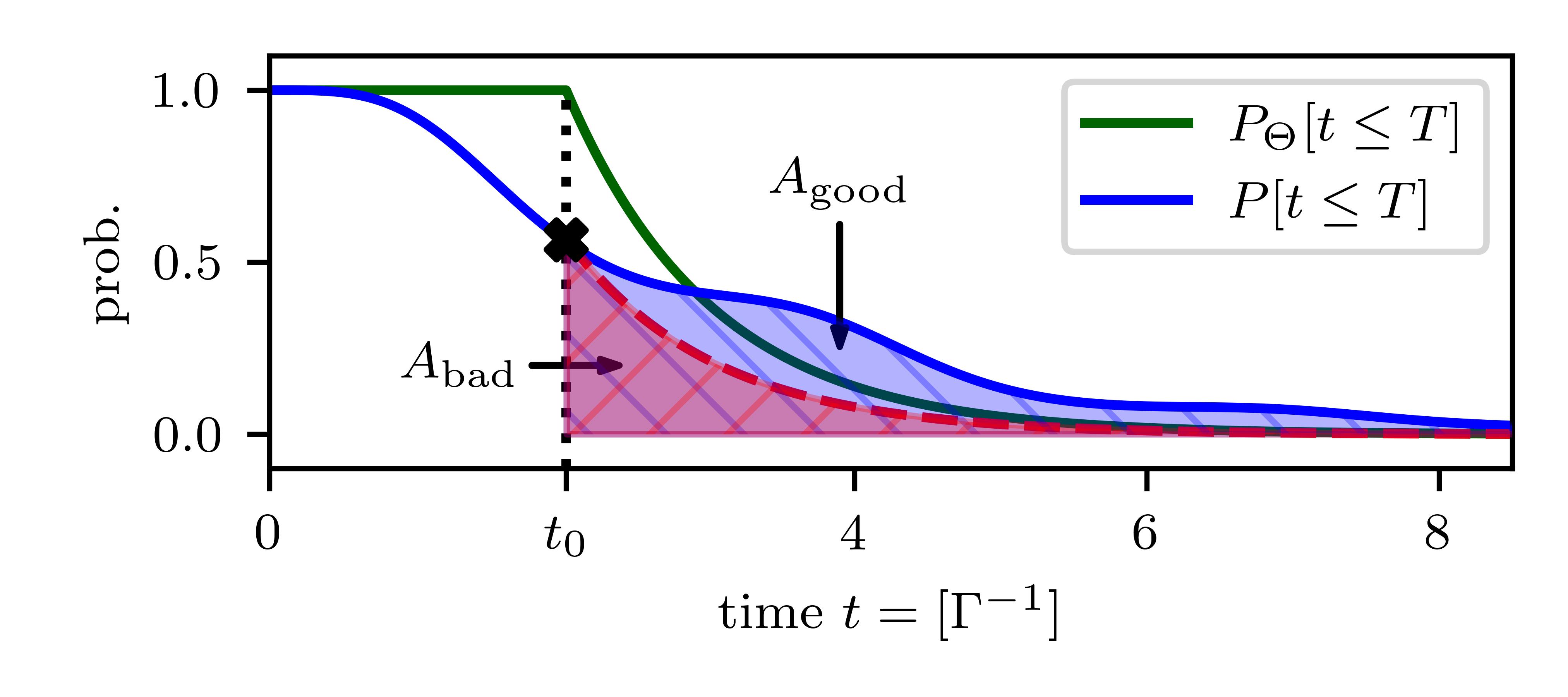}
    \caption{Here, we visualize step two of the proof of Proposition~\ref{prop:coincidenceprop}. Again, the solid, green line indicates the non-tick probability for the Heaviside population. To satisfy the construction that $P[t\leq T]$ and $P_\Theta[t\leq T]$ have the same integral on $\mathbb{R}_{\geq 0},$ the area below the graph of $P[t\leq T]$ for $t>t_0$ must be bigger than that of $P_\Theta[t\leq T].$ Thus, at some point after $t_0,$ $P[t\leq T]$ must be bigger than $P_\Theta[t\leq T]$ (see the blue area denoted by $A_\text{good}$), and not always smaller (see red area denoted by $A_\text{bad}$).}
    \label{fig:a2_step3}
\end{figure}
\begin{proof} We divide the proof of this proposition into three steps (a) - (c), with visual aids in Figures \ref{fig:a1_step1} and \ref{fig:a2_step3}.
\paragraph*{Step (a).} We claim that \textit{there exists a $0< t< t_0,$ such that $P[t\leq T]<1.$}
Suppose that this was not the case, then $P[t\leq T]=1$ for all $t\leq t_0$ (see Fig.~\ref{fig:a1_step1}).
By requirement, there must exists a $t$ such that $p(t)\neq p^\Theta(t),$ and because by our (contradictory) assumption $t$ cannot be smaller than $t_0,$ it must be bigger than $t_0$. Since, $P[t\leq T]\geq e^{-\Gamma(t-t_0)},$ for all $t>t_0,$ there must exist a $t>t_0,$ such that $P[t\leq T]> e^{-\Gamma(t-t_0)}$. But this contradicts the assumption that $\int_0^\infty dt P[t\leq T]=t_0+1/\Gamma.$ Therefore, it must be that there is a $t<t_0$ with $P[t\leq T]<1.$
\paragraph*{Step (b).} Our next claim is, that \textit{there exists an $0<a<t_0$ such that on $[0,a],$ we have $P[t\leq T]=1$ and for all other $a<t\leq t_0,$ $P[t\leq T]<1$.} By our previous claim (1), there exists $0<t<t_0$ with $P[t\leq T]<1.$ Let $a$ be the infimum of all such $t$.\footnote{The set $\{t : P[t\leq T] < 1\}$ is lower bounded by $0$ and non-empty by the claim made in step (a), thus, the infimum exists.} Then, $[0,a]\subset \mathcal{S}$ (continuity of $P$). Moreover, by the properties of an infimum, for all $\varepsilon>0,$ there exists a $\delta>0,$ such that $P[a+\delta\leq T]<1$. But $P[t\leq T]$ is monotonically decreasing, hence, this is true for all $\delta>0,$ such that $a+\delta\leq t_0$, proving the claim.
\paragraph*{Step (c).} In this last step, we show \textit{the existence of a unique $t_*>t_0$ such that $t_*\in \mathcal{S}.$} For a visualization of this step, see Fig.~\ref{fig:a2_step3}.
The integrals of $P[t\leq T]$ and $P_\Theta[t\leq T]$ on $\mathbb{R}_{\geq 0}$ coincide. However, we have just argued that on a non-empty interval between $0$ and $t_0,$ $P[t\leq T]$ is strictly smaller than $P_\Theta[t\leq T].$ This leads to the the inequality $\int_0^{t_0}dt\,P[t\leq T]<\int_0^{t_0}dt\,P_\Theta[t\leq T].$ In order to ensure that the integrals over all of $\mathbb{R}_{\geq 0}$ are equal, there must exist a $t'>t_0$ such that $P[t'\leq T]>P_\Theta[t'\leq T].$ For all $t>t',$ this inequality must be true too, because $P[t\leq T]$ can not drop faster than $e^{-\Gamma(t-t_0)},$ due to boundedness of the top-level population $p(t)\leq 1.$ Set $t_*$ to be the infimum of all such $t'.$ Note that $t_*$ must be strictly greater than $t_0$ due to continuity of $P[t\leq T].$ This concludes the proof of the last step and therefore of the proposition.
\end{proof}

\end{document}